\documentclass[preprint,9pt,numafflabel]{elsarticle}
\makeatletter
\def\ps@pprintTitle{%
 \let\@oddhead\@empty
 \let\@evenhead\@empty
 \def\@oddfoot{\centerline{\thepage}}%
 \let\@evenfoot\@oddfoot}
\makeatother
\usepackage{geometry}
\geometry{left=2.5cm,right=2.5cm,top=2.5cm,bottom=2.5cm}
\usepackage{floatrow}
\usepackage{subfig}
\usepackage{grffile}
\usepackage[labelfont=bf,labelsep=period]{caption}



\usepackage[colorlinks=true,citecolor=blue]{hyperref}
\usepackage[dvips]{epsfig}
\usepackage{color}
\usepackage{bm}
\usepackage{amssymb}
\usepackage{amsmath}
\usepackage{empheq}
\allowdisplaybreaks[4]
\usepackage{amsthm}
\usepackage{etoolbox}
\makeatletter
\patchcmd{\@endtheorem}{\@endpefalse}{}{}{}
\patchcmd{\endproof}{\@endpefalse}{}{}{}
\makeatother
\usepackage{amsfonts}
\usepackage{xspace}
\usepackage{upgreek}
\usepackage{graphicx}
\usepackage{graphicx}
\usepackage{epstopdf}
\usepackage[table]{xcolor}
\usepackage[title]{appendix}
\usepackage{float}
\usepackage{dsfont}
\usepackage{cleveref}



\usepackage{amssymb, mathtools}
\usepackage[table]{xcolor}

\theoremstyle{plain}
\newtheorem{theorem}{Theorem}

\theoremstyle{definition}

\theoremstyle{remark}
\newtheorem{remark}[theorem]{Remark}

\biboptions{sort&compress}
\usepackage{siunitx}
\usepackage{setspace}
\usepackage{esvect}
\usepackage{url}

\DeclareSymbolFont{bbold}{U}{bbold}{m}{n}
\DeclareSymbolFontAlphabet{\mathbbold}{bbold}


\usepackage{chngcntr}
\counterwithout{equation}{section}



\usepackage{accents}

\usepackage{mathtools}


\usepackage{color}

\usepackage[normalem]{ulem} 
\usepackage{xcolor}
\definecolor{forestgreen}{rgb}{0.33,0.61,0.34}

\usepackage{lineno}

\newcommand*{\email}[1]{%
    \footnotesize{*\,Email: #1}\par
    }







\begin{document}

\begin{frontmatter}



\title{Fixation dynamics on hypergraphs}


\author[1]{Ruodan Liu}
\author[1,2]{Naoki Masuda\corref{1}}%
 \address[1]{Department of Mathematics, State University of New York at Buffalo, Buffalo, New York, United States of America}
 \address[2]{Computational and Data-Enabled Sciences and Engineering Program, State University of New York at Buffalo, Buffalo, New York, United States of America\\ \email{naokimas@buffalo.edu}}

\begin{abstract}
%
%
Hypergraphs have been a useful tool for analyzing population dynamics such as opinion formation and the public goods game occurring in overlapping groups of individuals. In the present study, we propose and analyze evolutionary dynamics on hypergraphs, in which each node takes one of the two types of different but constant fitness values. For the corresponding dynamics on conventional networks, under the birth-death process and uniform initial conditions, most networks are known to be amplifiers of natural selection; amplifiers by definition enhance the difference in the strength of the two competing types in terms of the probability that the mutant type fixates in the population. In contrast, we provide strong computational evidence that a majority of hypergraphs are suppressors of selection under the same conditions by combining theoretical and numerical analyses. We also show that this suppressing effect is not explained by one-mode projection, which is a standard method for expressing hypergraph data as a conventional network. Our results suggest that the modeling framework for structured populations in addition to the specific network structure is an important determinant of evolutionary dynamics, paving a way to studying fixation dynamics on higher-order networks including hypergraphs.
\end{abstract}

\begin{keyword}
Evolutionary dynamics, fixation probability, constant selection, hypergraph, amplifier, suppressor
\end{keyword}

\end{frontmatter}

\section*{Author Summary}

Evolutionary dynamics describes spreading and competition of different types of individuals in a population. Prior research has revealed that the population structure, which is typically modeled by networks, is a key factor that affects evolutionary dynamics. Hypergraphs are a generalization of networks and model a set of groups in a population in which a group can involve more than two individuals who simultaneously interact, differently from conventional networks. In the present study, we ask a key question: do hypergraphs yield evolutionary dynamics that are drastically different from those on conventional networks? We have found that the hypergraphs that we have examined are suppressors of natural selection, which discounts the strength of the stronger type towards neutrality. This result is surprising because most conventional networks are amplifiers of natural selection, which magnifies the strength of the stronger type, under the same conditions. Our results suggest that how we model population structure in addition to the specific network structure is an important determinant of evolutionary dynamics.

\section{Introduction}

Populations of individuals, cells, and habitats, on which evolutionary processes take place often have structure that may be described by networks. Evolutionary graph theory enables us to mathematically and computationally investigate population dynamics in which multiple types of different fitness compete on networks under a selection pressure \cite{Nowak2006book, Szabo2007PhyRep, Nowak2010RSocB, Perc2013JRSoc}. A minimal evolutionary dynamics model on graphs, or networks, is to assume that there are two types of different fitness values, $1$ and $r$, which are constant over time (i.e., constant selection), and that each node is occupied by either type, which can change over time. A question then is which type eventually occupies the entire population in the absence of mutation during evolutionary dynamics,  i.e., which type fixates \cite{Lieberman2005nature}. In physics and mathematics literature, the same model is often called the biased voter model, and the fixation is often called the consensus \cite{Durrett1999siam, Antal2006PhysRevLett, Sood2008PhysRevE, Castellano2009RevModPhys}. Crucially, the fixation probability of each type, as well as other properties of the evolutionary dynamics such as the average time to fixation, depends on the network structure in addition to the value of $r$. 
Application of evolutionary graph theory to social dilemma games has also been successful in giving analytical insights into
various conditions under which cooperation is favored over defection \cite{Ohtsuki2006nature, Nowak2006book, Szabo2007PhyRep, Nowak2010RSocB, Allen2017nature}.

Evolutionary graph theory with conventional networks assumes that interaction between nodes occurs pairwise because each edge in a conventional network represents direct connectivity between two nodes.
However, in reality, more than two individuals may simultaneously interact and compete in evolutionary dynamics. For example, an evolutionary dynamics under constant selection can be regarded as a model of opinion dynamics in a population of human or animal individuals, in which they may meet in groups with more than two individuals for opinion formation. As another example, the public goods game is naturally defined for group interaction; each individual in the group decides whether or not to contribute to a collective good, and all individuals will receive a share of an augmented amount of the collective good regardless of whether or not they contributed.
Hypergraphs are a natural extension of conventional networks to the case of group interactions \cite{Battiston2020PhyRep, Lambiotte2019NatPhys, Bianconi2021book}. In a hypergraph, the concept of edge is extended to the hyperedge, which represents a unit of interaction and may contain more than two nodes. Population dynamics on hypergraphs such as evolutionary dynamics of social dilemma games \cite{Santos2008nature, Rodriguez2021natHumBehav} and opinion formation \cite{Sahasrabuddhe2021J.ofPhys, Neuhauser2021PhysRevE} have been investigated
%
%
(see \cite{Battiston2020PhyRep, Majhi2022JRSoc} for reviews including different types of dynamics).

A key question in the study of constant-selection dynamics on networks is whether the given network is amplifier or suppressor of natural selection~\cite{Lieberman2005nature, Nowak2006book}. Suppose that the resident and mutant type have constant fitness values $1$ and $r$, respectively. We anticipate that the mutant type is more likely to fixate than the resident type under the same condition if $r > 1$ and vice versa. In fact, how much the fixation probability of the mutant type increases as one increases $r$ depends on the network structure. Some networks are amplifiers of selection; on these networks, a single mutant has a higher probability of fixation than in the well-mixed population, corresponding to the Moran process, at any $r>1$ and a smaller fixation probability than in the Moran process at any $r<1$. Other networks are suppressors of selection; on these networks, a single mutant has a lower fixation probability than in the Moran process at any $r>1$ and a higher fixation probability than in the Moran process at any $r<1$.
Under the so-called birth-death processes, which we focus on, most networks are known to be amplifiers of selection, at least when the initial mutant is located on a node selected uniformly at random \cite{Hindersin2015PLOS, Cuesta2018PlosOne, Allen2021PlosComputBiol}. Research has discovered various classes of amplifiers of selection \cite{Lieberman2005nature,Giakkoupis2016arxiv,Galanis2017JAcm,Pavlogiannis2017SciRep,Pavlogiannis2018CommBiol,Goldberg2019TheorComputSci} while few for suppressors of selection \cite{Cuesta2017PlosOne}. 

In the present study, we study constant-selection evolutionary dynamics on hypergraphs without mutation. We propose two such models and formulate the fixation probability of mutants using theory based on Markov chains, which extends the same approach for conventional networks to the case of hypergraphs. Then, we examine whether hypergraphs are amplifiers of selection, suppressors of selection, or neither. We show to our numerical efforts that, contrary to the known results for networks, all the hypergraphs that we have investigated are suppressors of selection even under the birth-death process and uniform initial condition. We reach this conclusion by semi-analytical investigations for hypergraphs with high symmetry and numerical simulations on empirical hypergraphs. 

\section{Models}

We introduce two models of evolutionary dynamics on undirected hypergraphs. Let $H$ be an undirected hypergraph with node set $V = \{1, \ldots, N\}$, where $N$ is the number of nodes, and hyperedge set $E$. Each $e \in E$, where $e\subset V$ and $e\neq \emptyset$, is a hyperedge, intuitively representing group interaction among the nodes belonging to $e$. If each $e\in E$ is a set containing exactly two nodes, $H$ is a conventional undirected network.
We assume that $H$ is connected in the sense that there is a hyperedge intersecting both $W$ and $V-W$ for every non-empty proper subset $W$ of $V$.

We define model 1 by the following discrete-time evolutionary dynamics on hypergraph $H$, which extends the birth-death process for conventional networks (precisely speaking, birth-death process with selection on the birth, or the Bd rule \cite{Masuda2009JTB, Shakarian2012Biosys, Pattni2015ProcRSocA}) and the Moran process in well-mixed populations. We assume that there are two types of individuals, referred to as A and B, and that A and B have fitness $r$ and $1$, respectively. We refer to A as the mutant type and B as the resident type. In each time step, we select one node for reproduction with the probability proportional to its fitness. We call this node the parent. Then, the parent node selects one of the hyperedges to which it belongs, denoted by $e$, with the equal probability, regardless of the size of $e$ (i.e., the number of nodes contained in $e$). Finally, the parent converts the type (i.e., A or B) of all other nodes belonging to $e$ into the parent's type. We repeat this process until all nodes in $V$ have the same type. Once this unanimity configuration is reached, which is the fixation, no node will change its type even if one further runs the evolutionary dynamics. We examine the probability that every node eventually has type A, which we call the fixation probability of type A. 

Model 2 is the same as model 1 in that there are two types of individuals with fitness $r$ and $1$, that we select a parent node in each time step with the probability proportional to its fitness, and that the parent selects one of its hyperedges, $e$.
Differently from the model 1, the parent node converts all the other nodes belonging to $e$ into the parent's type if and only if the parent's type is the majority in $e$, i.e., if more than half of the nodes in $e$ including the parent have the same type as the parent's type. Nothing occurs in a time step if the parent's type is not the majority in $e$, including in the case of a tie when the size of $e$ is even. By assumption, a type has to be a majority in an hyperedge for it to spread even when it is stronger than the other type (i.e., type A stronger than type B when $r>1$, and vice versa when $r<1$). Model 2 is an evolutionary dynamics variant of opinion formation models under the majority rule  \cite{Liggett1985book, deOliveira1992JourofStatPhys, Castellano2009RevModPhys}.

It should be noted that the fixation does not necessarily occur in model 2. As an example, consider the hypergraph given by $V = \{ 1, \ldots, 6 \}$ and $E=\{ \{ 1, 2, 3 \}, \{3, 4, 5 \}, \{ 4, 5, 6 \}, \{ 6, 1, 2 \} \}$, and the initial condition in which mutants are only on nodes 1 and 2. Then, under the evolutionary dynamics given by model 2, nodes 4 and 5 will forever be residents, and nodes 3 and 6 will toggle indefinitely.

The assumption that all nodes belonging to a hyperedge $e$ copy the parent's type, made for both models, may seem strong, especially when the size of $e$ is large. In fact, if we allow only one node belonging to $e$ to copy the parent's type in a time step, then it is straightforward to show that model 1 is equivalent to the birth-death process on the conventional network called the one-mode projection (see sections~\ref{sub:one-mode-star} and \ref{sec:numerical} for results for the one-mode projection). Therefore, by assuming that all nodes in $e$ update their type in a time step, we examine effects of an extreme case of allowing simultaneous updating of nodes constrained by the hypergraph structure. 

When the network is the complete graph, which is a conventional undirected network and therefore a hypergraph, model 1 is called the Moran process. For the Moran process, the fixation probability of type A when there are initially $i$ individuals of type A, denoted by $x_i$, is given by \cite{Nowak2006book}
\begin{equation}
x_i=\frac{1-1/r^i}{1-1/r^N}.
\label{eq:Moran-x_i}
\end{equation}

\section{Results for synthetic hypergraphs}

In this section, for three model hypergraphs that are mathematically convenient, we calculate the fixation probability for mutant type A when there are initially $i$ nodes of type A and $N-i$ nodes of type B, for both models 1 and 2. For the cyclic 3-uniform hypergraph, which we introduce in section~\ref{sub:cyclic-model1}, we impose $i=1$ and $i=2$ for models 1 and 2, respectively, for a technical reason to be explained.
The fixation probability depends on the initial condition. We select the $i$ mutant nodes from the $N$ nodes uniformly at random, which is called the uniform initialization \cite{Adlam2015ProcRSocA}.

For any hypergraph of size $N$, at any time step, either type A or B inhabits each node. Therefore, there are $2^N$ states in total. 
The fixation probability for type A depends on each state, not just on $i$. To know the fixation probability for type A for the initial states with $i$ mutants, we need to
solve a linear system of $2^N-2$ unknowns, where each unknown is the fixation probability for an initial state. Note that we safely excluded the two unknowns corresponding to the two initial states in which all nodes unanimously have type A or B. We did so because the fixation probability for type A is trivially equal to 1 and 0 for these two states. To solve a linear system with $2^N-2$ unknowns is daunting except when $N$ is small. 
Therefore, we analyze three types of symmetric hypergraphs in which all or most nodes are structurally equivalent to each other. 
For these hypergraphs, we only need to track the number of nodes with type A among the structurally equivalent nodes. By doing so, we can drastically reduce the dimension of the linear system to be analytically or numerically solved. In this section, we denote the number of nodes of type A in the entire hypergraph by $i \in \{0, \ldots, N\}$. The fixation of type A and B corresponds to $i=N$ and $i=0$, respectively.

\subsection{Model 1\label{sub:model1}}

\subsubsection{Complete 3-uniform hypergraph}

We have mentioned that our model 1 on the complete graph is equivalent to the Moran process. To investigate whether model 1 on counterparts of the complete graph for hypergraphs is equivalent to the Moran process, we consider the complete 3-uniform hypergraph \cite{Verrall1994DisMath}. A complete 3-uniform hypergraph on node set $V$ is defined by hyperedge set $E$, which is the set of all subsets of $V$ containing just three nodes. In other words,
$E = \{ \{v_1, v_2, v_3\} ; v_1, v_2, v_3 \in V, v_1 \neq v_2, v_1 \neq v_3, v_2 \neq v_3 \}$. We show the complete 3-uniform hypergraph on four nodes in Fig~\ref{fig:hypergraphs}A as an example.

\floatsetup[figure]{style=plain,subcapbesideposition=top}
\captionsetup{font={small,rm}} 
\captionsetup{labelfont=bf}
\begin{figure}[H]
  \centering
  \includegraphics[width=1.0\linewidth]{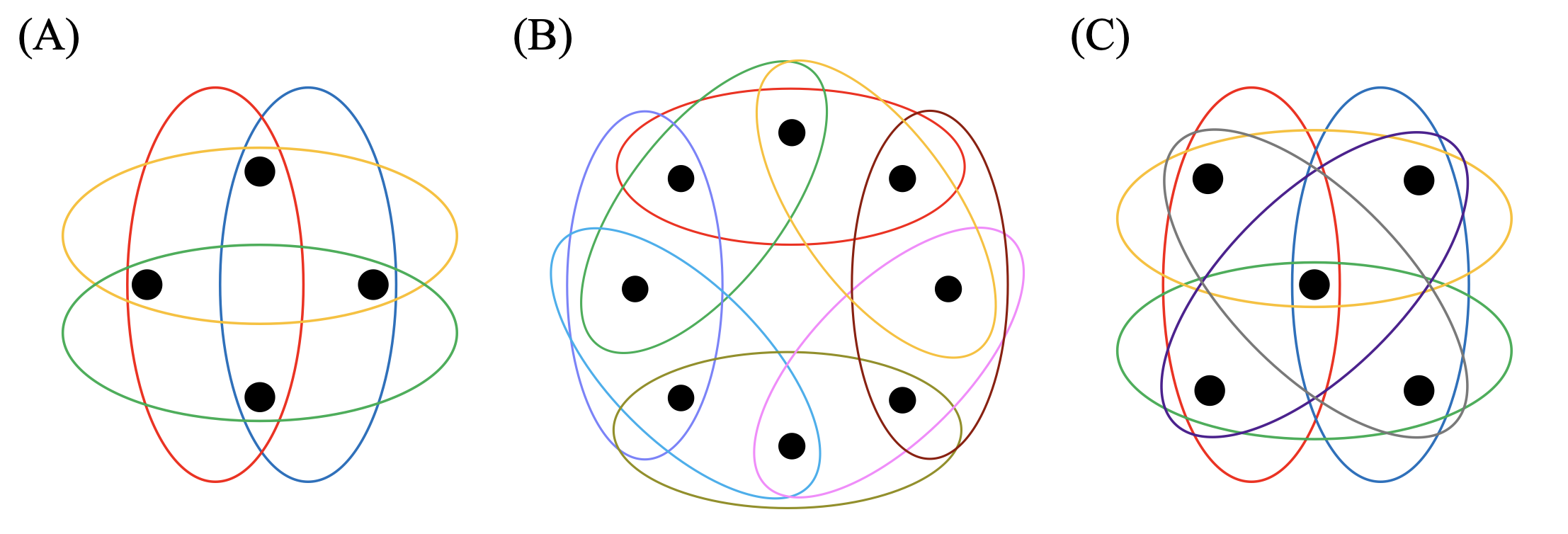}
   \caption{Examples of 3-uniform hypergraphs. (A) Complete 3-uniform hypergraph with $N=4$ nodes. (B) Cyclic 3-uniform hypergraph with $N=8$ nodes. (C) Star 3-uniform hypergraph with $N=5$ nodes. Each colored oval represents a hyperedge.}
   \label{fig:hypergraphs}
\end{figure}

In this section, we refer to $i$, i.e., the number of nodes of type A, as the state of the evolutionary dynamics. Note that knowing the dynamics of $i$ is enough for completely understanding the evolutionary dynamics on the complete 3-uniform hypergraph owing to its symmetry. Under model 1, state $i$ either remains unchanged or moves to $i-2$, $i-1$, $i+1$, or $i+2$ in a single time step of the evolutionary dynamics. This is because every hyperedge of the complete 3-uniform hypergraph has three nodes, and therefore there are at most two nodes that flip their type in a time step.

We denote the $(N+1)\times (N+1)$ transition probability matrix by $P=[p_{i,j}]$, where $p_{i,j}$ is the probability that the state moves from $i$ to $j$ in a time step. At state $i$, the probability that a node of type A and B is selected as parent is equal to $ri/(ri+N-i)$ and $(N-i)/(ri+N-i)$, respectively. If the selected parent node is of type A, a hyperedge containing the parent and two nodes of type B is used for reproduction with probability $\binom{N-i}{2}/\binom{N-1}{2}$, where $\binom{n}{k}$ represents the binomial coefficient ``$n$ choose $k$''. In this case, $i$ increases by $2$.
Alternatively, a hyperedge containing the parent, a different node of type A, and a node of type B is used for reproduction with probability
$\binom{i-1}{1}\binom{N-i}{1}/\binom{N-1}{2}$. In this case, $i$ increases by $1$. Otherwise, a hyperedge containing the parent and two other nodes of type A is used for reproduction. In this case, $i$ does not change.
If the parent is of type B, a hyperedge containing the parent and two nodes of type A is used for reproduction with probability $\binom{i}{2}/\binom{N-1}{2}$. In this case, $i$ decreases by $2$. Alternatively, a hyperedge containing the parent, a node of type A, and a different node of type B is used for reproduction with probability $\binom{i}{1}\binom{N-i-1}{1}/\binom{N-1}{2}$. In this case, $i$ decreases by $1$. Otherwise, a hyperedge containing the parent and two other nodes of type B is selected. In this case, $i$ does not change. Therefore, the transition probabilities are given by
\begin{linenomath}
\begin{align}
p_{0,0}&=1,\\
p_{N,N}&=1,\\
p_{i,i-2}&=\frac{N-i}{ri+N-i}\cdot\frac{\binom{i}{2}}{\binom{N-1}{2}}=\frac{N-i}{ri+N-i}\cdot\frac{i(i-1)}{(N-1)(N-2)}, \quad i\in \{2, \ldots, N\},\\
p_{i,i-1}&=\frac{N-i}{ri+N-i}\cdot\frac{\binom{i}{1}\binom{N-i-1}{1}}{\binom{N-1}{2}}=\frac{N-i}{ri+N-i}\cdot\frac{2i(N-i-1)}{(N-1)(N-2)}, \quad i\in \{1, \ldots, N\},\\
p_{i,i+1}&=\frac{ri}{ri+N-i}\cdot\frac{\binom{i-1}{1}\binom{N-i}{1}}{\binom{N-1}{2}}=\frac{ri}{ri+N-i}\cdot\frac{2(i-1)(N-i)}{(N-1)(N-2)}, \quad i\in \{2, \ldots, N-1\},\\
p_{i,i+2}&=\frac{ri}{ri+N-i}\cdot\frac{\binom{N-i}{2}}{\binom{N-1}{2}}=\frac{ri}{ri+N-i}\cdot\frac{(N-i)(N-i-1)}{(N-1)(N-2)}, \quad i\in \{1, \ldots, N-2\},\\
p_{1,1}&=1-p_{1,0}-p_{1,2}-p_{1,3},\\
p_{i,i}&=1-p_{i,i-2}-p_{i,i-1}-p_{i,i+1}-p_{i,i+2}, \quad i\in \{2, \ldots, N-2\},\\
p_{N-1,N-1}&=1-p_{N-1,N-3}-p_{N-1,N-2}-p_{N-1,N}.
\end{align}
\end{linenomath}
All the other entries of $P$ are equal to zero. Therefore, $P$ is a pentadiagonal matrix. States $i=0$ and $i=N$ are absorbing states.

Denote by $x_i$ the probability of ending up in state $N$, i.e., fixation probability of the mutant type A, when the initial state is $i$. We obtain
\begin{linenomath}
\begin{align}
x_0&=0,
\label{complete-x0}\\
x_1&=p_{1,0}x_{0}+p_{1,1}x_{1}+p_{1,2}x_{2}+p_{1,3}x_{3},\label{complete-x1}\\
x_i&=p_{i,i-2}x_{i-2}+p_{i,i-1}x_{i-1}+p_{i,i}x_{i}+p_{i,i+1}x_{i+1}+p_{i,i+2}x_{i+2}, \quad i\in \{2, \ldots, N-2\},
\label{eq:recursive-complete}\\
x_{N-1}&=p_{N-1,N-3}x_{N-3}+p_{N-1,N-2}x_{N-2}+p_{N-1,N-1}x_{N-1}+p_{N-1,N}x_{N},\label{complete-xn-1}\\
x_N&=1.
\label{complete-xn}
\end{align}
\end{linenomath}
In vector notation, we can concisely write Eqs.~\labelcref{complete-x0,complete-x1,eq:recursive-complete,complete-xn-1,complete-xn} as
\begin{equation}\label{veceq}
\bm{x}=P\bm{x},
\end{equation}
where $\bm{x}=(x_0, x_1, x_2, \ldots, x_N)^{\top}$, and $^{\top}$ represents the transposition.
Equation~\eqref{veceq} is equivalent to
\begin{equation}\label{homo}
(P-I)\bm{x}=\bm{0},
\end{equation}
where $I$ is the identity matrix. Using Eqs.~\eqref{complete-x0}, \eqref{complete-xn}, and \eqref{homo}, we obtain
\begin{equation}\label{eq:Mx=b}
M\bm{x}=\bm{b},
\end{equation}
where $\bm{b}=(0, 0, \ldots, 0, 1)^{\top}$, and
\begin{equation}
M=
\begin{pmatrix}
1 & 0 & 0 & 0 & 0 & \cdots & 0 & 0 & 0 & 0\\
p_{1,0} & p_{1,1}-1 & p_{1,2} & p_{1,3} & 0 & \cdots & 0 & 0 & 0 & 0\\
p_{2,0} & p_{2,1} & p_{2,2}-1 & p_{2,3} & p_{2,4} & \cdots & 0 & 0 & 0 & 0\\
\vdots & \vdots & \vdots & \vdots & \vdots & \cdots & \vdots & \vdots & \vdots & \vdots\\
0 & 0 & 0 & 0 & 0 & \cdots & p_{N-1,N-3} & p_{N-1,N-2} & p_{N-1,N-1}-1 & p_{N-1,N}\\
0 & 0 & 0 & 0 & 0 & \cdots & 0 & 0 & 0 & 1
\end{pmatrix}.
\end{equation}
Like $P$, the $(N+1)\times (N+1)$ matrix $M$ is a pentadiagonal matrix. 

PTRANS-I and PTRANS-II are numerical algorithms for efficiently solving pentadiagonal linear systems \cite{Askar2015}. 
Note that we are calculating $\bm x$ although we only need $x_1$. These two algorithms run in O($\log n$) time, where $n$ is the number of unknowns, and, for $n=4000$, they are about ten times faster than the SciPy algorithm for banded matrices, scipy.linalg.solve\underline{~~}banded \cite{scipy.linalg.solve.banded}. We use the PTRANS-II built in a Python package pentapy \cite{Muller2019pentapy} to calculate the fixation probability as a function of $r$ unless $N$ is small. For a given value of $r$, the calculation requires $O(\log N)$ time, which is much faster than solving a full linear system of $2^N-2$ unknowns.

For small complete 3-uniform hypergraphs, one can analytically calculate $x_1$ by directly solving Eq.~\eqref{eq:Mx=b}.
We obtain
\begin{equation}
x_1 = \frac{r^2}{r^2+2r+1}
\label{eq:complete_N=4}
\end{equation}
for $N=4$ and
\begin{equation}
x_1 = \frac{r^2(8r^2+12r+1)}{8r^4+28r^3+33r^2+28r+8}
\label{eq:complete_N=5}
\end{equation}
for $N=5$. We show $x_i$ for $i \in \{2, \ldots, N-1\}$, $N \in \{4, 5\}$ in Text A in S1 File. We compare Eqs.~\eqref{eq:complete_N=4} and \eqref{eq:complete_N=5} with $x_1$ for the Moran process (i.e., Eq.~\eqref{eq:Moran-x_i} with $i=1$) with $N=4$ and $N=5$ as a function of $r$ in Fig~\ref{fig:m1_comparison}A and \ref{fig:m1_comparison}B, respectively. Although the results are shown by blue lines, they are hidden behind the orange lines. The figure indicates that these complete 3-uniform hypergraphs are suppressors of selection because their $x_1$ is smaller than that for the Moran process for $r>1$ and larger for $r<1$. We also obtained $x_1$ by numerically solving Eq.~\eqref{eq:Mx=b} for $N=20$ and $N=200$. The results shown in Fig~\ref{fig:m1_comparison}C and \ref{fig:m1_comparison}D for $N=20$ and $N=200$, respectively, indicate that these larger complete 3-uniform hypergraphs are also suppressors of selection. Therefore, we conclude that the complete 3-uniform hypergraphs are suppressors of selection under the evolutionary dynamics described by model 1.

\floatsetup[figure]{style=plain,subcapbesideposition=top}
\captionsetup{font={small,rm}} 
\captionsetup{labelfont=bf}
\begin{figure}[t]
  \centering
  \includegraphics[width=0.9\linewidth]{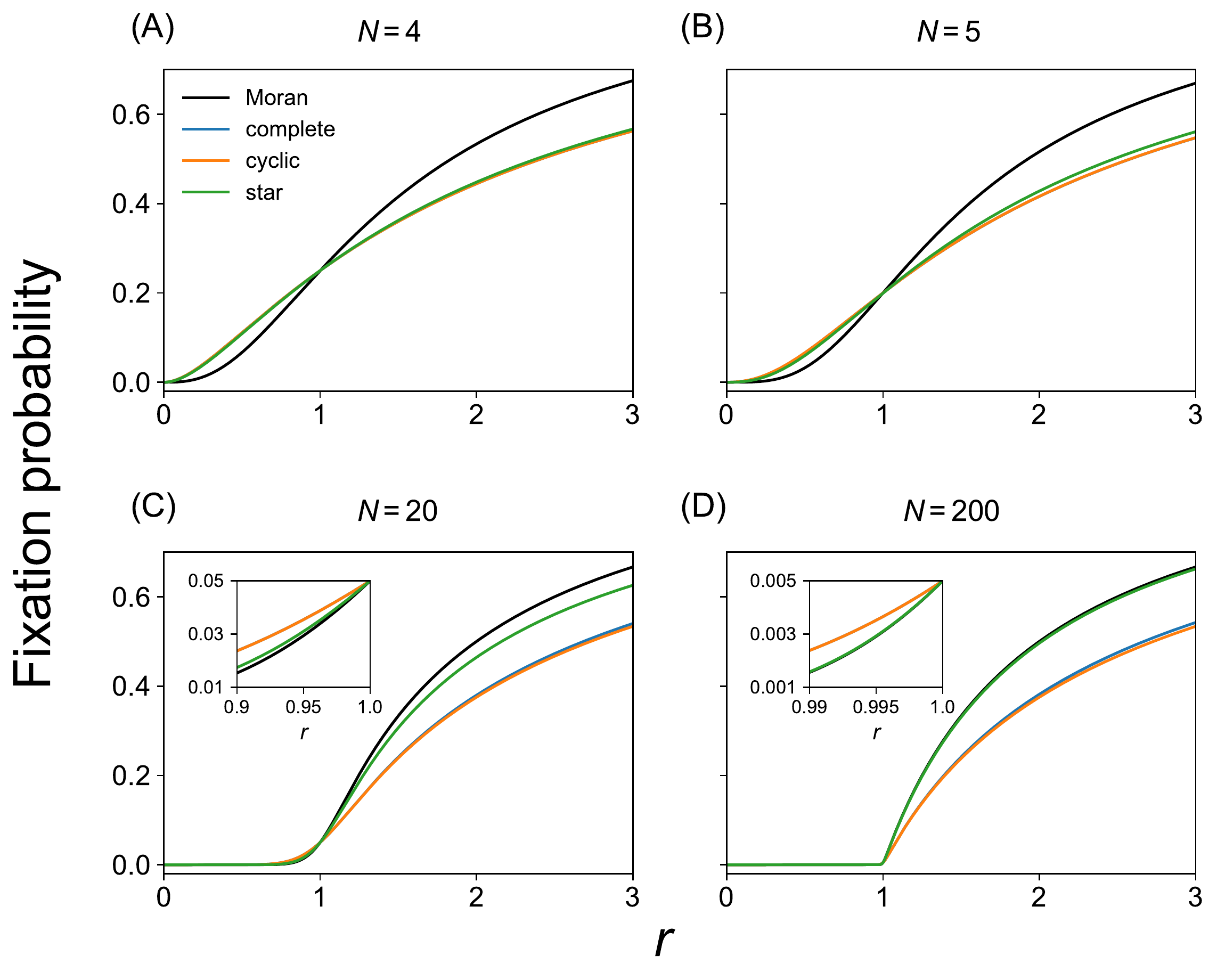}
   \caption{Fixation probability for different hypergraph models under model 1. We compare the Moran process, which is the baseline, complete 3-uniform hypergraphs, cyclic 3-uniform hypergraphs, and star 3-uniform hypergraphs. (A) $N=4$. (B) $N=5$. (C) $N=20$. (D) $N=200$. The insets in (C) and (D) magnify the results for $r$ values smaller than and close to $r=1$. In these insets and main panel (B), the results for the complete 3-uniform hypergraph are close to those for the cyclic hypergraph such that the blue lines are almost hidden behind the orange lines. In (A), the two results are exactly the same such that the blue line is completely hidden behind the orange line.
}
   \label{fig:m1_comparison}
\end{figure}

\subsubsection{Cyclic 3-uniform hypergraph\label{sub:cyclic-model1}}

The fixation probability as a function of $r$ for undirected cycle graphs is the same as that for the Moran process because the cycle graphs are so-called isothermal graphs under the birth-death process \cite{Nowak2006book, Lieberman2005nature, Pattni2015ProcRSocA}.
To examine whether the same equivalence result holds true for hypergraphs, we consider an extension of the cycle graph to the hypergraph, which we call the cyclic 3-uniform hypergraph. A cyclic 3-uniform hypergraph consists of a node set $V= \{1, \ldots, N\}$ and a hyperedge set $E = \{\{1,2,3\}, \{2,3,4\},\ldots, \{N-2, N-1, N\}, \{N-1, N,1\}, \{N, 1, 2\}\}$, i.e., any three consecutively numbered nodes with a periodic boundary condition form a hyperedge. The cyclic 3-uniform hypergraph with $N=3$ nodes is the complete 3-uniform hypergraph. Therefore, we consider cyclic 3-uniform hypergrapys with $N\ge 4$. We show the cyclic 3-uniform hypergraph on 8 nodes in Fig~\ref{fig:hypergraphs}B.

We assume that there is initially one individual of type A. Then, under model 1, all the nodes of type A are consecutively numbered at any time, without being interrupted by nodes of type B. Therefore, similarly to the analysis of evolutionary dynamics on cycles \cite{Ohtsuki2006ProcRSocB, Broom2008ProcRSocA}, it suffices to track the number of nodes of type A, which we again denote by $i$, to understand the evolutionary dynamics on this hypergraph. It should be noted that this technique can only be used when $i=1$ or when the $i$ $(\ge 2)$ nodes of type A are consecutively located on the cycle in the initial state.

\floatsetup[figure]{style=plain,subcapbesideposition=top}
\captionsetup{font={small,rm}} 
\captionsetup{labelfont=bf}
\begin{figure}[H]
  \centering
  \includegraphics[width=1.0\linewidth]{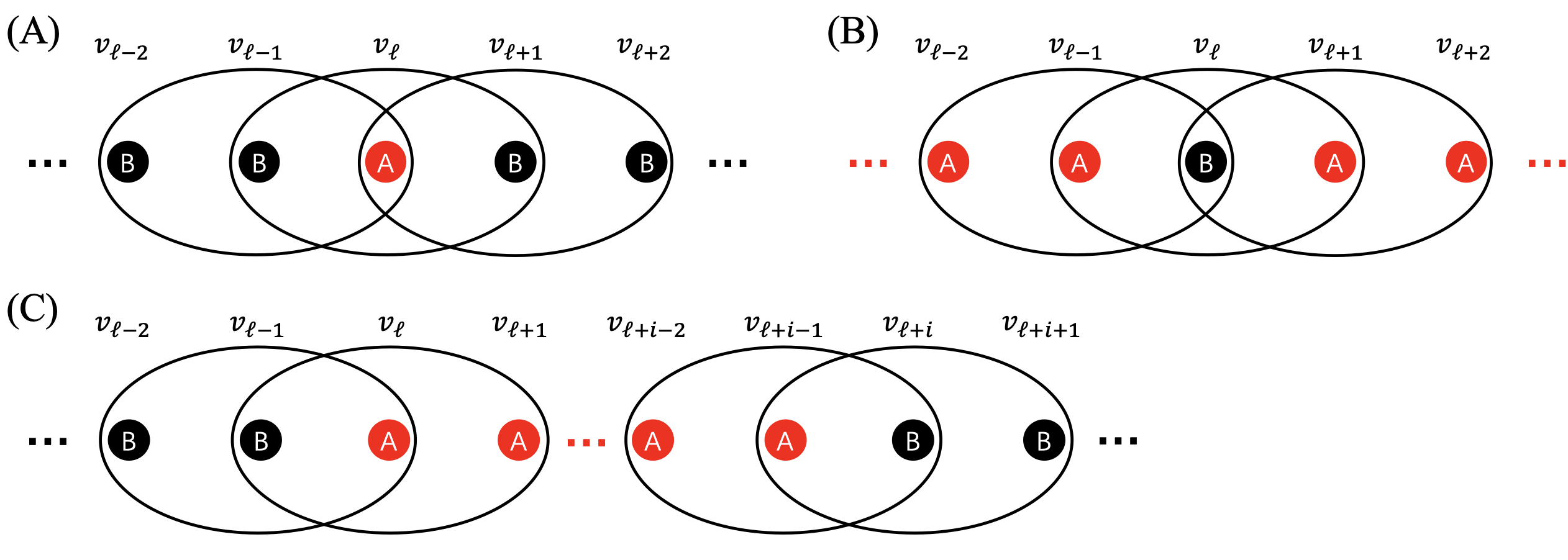}
   \caption{State transitions in the cyclic 3-uniform hypergraph. (A) A state with just one node of type A (i.e., $i=1$). (B) A state with just one node of type B (i.e., $i=N-1$). (C) A state with more than one nodes of each type (i.e., $2\le i\le N-2$).}
   \label{fig:cyclic}
\end{figure}

When $i=1$, there are three types of events that can occur next. Without loss of generality, we assume that the $\ell$th node is the only node of type A (see Fig~\ref{fig:cyclic}A for a schematic). The state moves from $i=1$ to $i=0$ in one time step such that B fixates in the following two cases.
In the first case, either node $v_{\ell-2}$ or $v_{\ell+2}$, which is of type B, is selected as parent. If $N\ge 5$, these two nodes are distinct. Therefore, this event occurs with probability $2/(r+N-1)$.
Then, the hyperedge that contains the parent and $v_{\ell}$ is used for reproduction, which occurs with probability $1/3$. 
For example, if $v_{\ell-2}$ is selected as parent and hyperedge $\{\ell-2, \ell-1, \ell\}$ is used for reproduction, then the state moves from $1$ to $0$.
In the second case, either $v_{\ell-1}$ or $v_{\ell+1}$ is selected as parent, which occurs with probability $2/(r+N-1)$. Then, one of the two hyperedges that contain the parent and $v_{\ell}$ is used for reproduction, which occurs with probability $2/3$. For example, if $v_{\ell-1}$ is selected as parent and hyperedge $\{ \ell-2, \ell-1, \ell \}$ or $\{ \ell-1, \ell, \ell+1 \}$ is used for reproduction, then the state moves from $1$ to $0$.
By summing up these probabilities, we obtain
\begin{equation}
p_{1,0}=\frac{2}{r+N-1}.
\label{eq:model1-cycle-i=1-decrease1}
\end{equation}
In fact, Eq.~\eqref{eq:model1-cycle-i=1-decrease1} also holds true for $N=4$ although $v_{\ell-2}$ and $v_{\ell+2}$ are identical nodes when $N=4$; see~Text B in S1 File for the proof.
Alternatively, the state moves from $i=1$ to $i=3$ whenever $v_{\ell}$ is selected as parent, which occurs with probability $r/(r+N-1)$. Therefore, we obtain
\begin{equation}
p_{1,3}=\frac{r}{r+N-1}.
\end{equation}
If any other event occurs, then $i=1$ remains unchanged. Therefore, we obtain
\begin{linenomath}
\begin{align}
p_{1,1} &= 1-p_{1,0}-p_{1,3}=\frac{N-3}{r+N-1},\\
p_{1,j} &= 0 \text{ if } j \neq 0 , 1, 3.
\end{align}
\end{linenomath}

When $i=N-1$, without loss of generality, we assume that the $\ell$th node is the only node of type B (see Fig~\ref{fig:cyclic}B for a schematic). As shown in Text C in S1 File, we obtain
\begin{linenomath}
\begin{align}
p_{N-1,N-3}&=\frac{1}{r(N-1)+1}, 
\label{eq:model1-cycle-i=n-1-decrease2}\\
p_{N-1,N}&=\frac{2r}{r(N-1)+1},
\label{eq:model1-cycle-i=n-1-increase1}\\
p_{N-1,N-1}&=1-p_{N-1,N-3}-p_{N-1,N}=\frac{r(N-3)}{r(N-1)+1},
\label{eq:model1-cycle-i=n-1-unchanged}\\
p_{N-1, j}&=0 \text{ if } j \neq N-3 , N-1, N.
\label{eq:model1-cycle-i=n-1-impossible}
\end{align}
\end{linenomath}

When $i\in \{2, \ldots, N-2\}$, without loss of generality, we assume that the $\ell$th to the $(\ell+i-1)$th nodes are of type A and that all other nodes are of type B (see Fig~\ref{fig:cyclic}C for a schematic).
As shown in Text C in S1 File, we obtain
\begin{linenomath}
\begin{align}
p_{i,i-2}&=\frac{2}{3(ri+N-i)}, 
\label{eq:model1-cycle-i=i-decrease2}\\
p_{i,i-1}&=\frac{4}{3(ri+N-i)},
\label{eq:model1-cyclic-decrease1}\\
p_{i,i+1}&=\frac{4r}{3(ri+N-i)},
\label{eq:model1-cyclic-increase1}\\
p_{i,i+2}&=\frac{2r}{3(ri+N-i)},
\label{eq:model1-cycle-i=i-increase2}
\end{align}
\end{linenomath}
and
\begin{equation}
p_{i,i}=1-p_{i,i-2}-p_{i,i-1}-p_{i,i+1}-p_{i,i+2}.
\label{eq:model1-cycle-i=i-unchanged}
\end{equation}
All the other entries of transition probability matrix $P$ are equal to zero. 

By solving Eq.~\eqref{eq:Mx=b}, we obtain
\begin{equation}\label{eq:cyclic_N=4}
x_1 = \frac{r^2}{r^2+2r+1}
\end{equation}
for $N=4$ and
\begin{equation}\label{eq:cyclic_N=5}
x_1 = \frac{r^2(6r^2+8r+1)}{6r^4+20r^3+23r^2+20r+6}
\end{equation}
for $N=5$.
We show $x_i$ for $i \in \{2, \ldots, N-1\}$, $N \in \{4, 5\}$ in Text A in S1 File. We compare Eqs.~\eqref{eq:cyclic_N=4} and \eqref{eq:cyclic_N=5} with $x_1$ for the Moran process with $N=4$ and $N=5$ in Fig~\ref{fig:m1_comparison}A and \ref{fig:m1_comparison}B, respectively. We find that these cyclic 3-uniform hypergraphs are suppressors of selection. The result for $N=4$ given by Eq.~\eqref{eq:cyclic_N=4} coincides with that for the complete 3-uniform hypergraph given by Eq.~\eqref{eq:complete_N=4}. For larger $N$, we use the same numerical method for solving Eq.~\eqref{eq:Mx=b} as that for the complete 3-uniform hypergraph because $P$ is a pentadiagonal matrix. We show the thus calculated $x_1$ for $N=20$ and $N=200$ in Fig~\ref{fig:m1_comparison}C and \ref{fig:m1_comparison}D, respectively. These figures confirm that these larger cyclic 3-uniform hypergraphs are also suppressors of selection. Therefore, we conclude that the cyclic 3-uniform hypergraph is a suppressor of selection.

\subsubsection{Star 3-uniform hypergraph\label{sub:model1-star}}

The conventional star graphs are strong amplifiers of selection \cite{Lieberman2005nature, Pavlogiannis2018CommBiol}. To examine whether counterparts of the star graph for hypergraphs are also amplifiers of selection, we define the star 3-uniform hypergraph as follows and examine the fixation probability of the mutant on it.
A star 3-uniform hypergraph consists of a node set $V$ with a single hub node and $N-1$ leaf nodes, and hyperedges each of which is
of size three and consists of the hub node and a pair of $N-1$ leaf nodes. We use all the $\binom{N-1}{2}$ pairs of leaf nodes to form hyperedges such that the hub node belongs to $\binom{N-1}{2}$ hyperedges. Each leaf node belongs to $N-1$ hyperedges and are structurally equivalent to each other. The star 3-uniform hypergraph with $N=3$ is the complete 3-uniform hypergraph. Therefore, we consider star 3-uniform hypergraphs with $N\ge 4$. We show the 3-uniform star hypergraph on 5 nodes in Fig~\ref{fig:hypergraphs}C as an example.

Owing to the structural equivalence among the $N-1$ leaf nodes, the state of the evolutionary dynamics on the star 3-uniform hypergraph is completely determined by the type on the hub (i.e., either A or B) and the number of leaf nodes of type A, which ranges between $0$ and $N-1$. Therefore, we denote the state by
$(i_1, i_2)$, where $i_1=1$ or 0 if the hub node is of type A or B, respectively, and $i_2 \in \{0, 1, \ldots, N-1 \}$ represents the number of the leaf nodes of type A. 
The total number of nodes of type A is given by $i=i_1+i_2$. The fixation of type A and B corresponds to $(i_1, i_2) = (1, N-1)$ and $(0, 0)$, respectively.

We denote the probability that type A fixates starting with state $(i_1, i_2)$ by $\tilde{x}_{(i_1,i_2)}$. 
As shown in Text G in S1 File, we obtain the vector of fixation probability,
$\tilde{\bm{x}} \equiv (\tilde{x}_{(0, 0)}, \tilde{x}_{(0, 1)}, \ldots, \tilde{x}_{(0, N-1)}, \tilde{x}_{(1, 0)}, \tilde{x}_{(1, 1)}, \ldots, \tilde{x}_{(1, N-1)})^{\top}$, 
as follows:
\begin{equation}\label{veceqstar}
\tilde{\bm{x}}=P\tilde{\bm{x}},
\end{equation}
where the $2N\times 2N$ stochastic matrix $P$ is given by
\begin{equation}
P=
\left(\begin{array}{ c | c }
    C & D \\
    \hline
    E & F
\end{array}\right),
\label{eq:P-block}
\end{equation}
and $C, D, E,$ and $F$ are $N\times N$ matrices given in Text G in S1 File.
The transition matrix $P$ is a sparse matrix. We use the scipy implementation of the DGESV routine of LAPACK, scipy.linalg.solve, to numerically solve Eq.~\eqref{veceqstar} to obtain $\tilde{x}_{(0, i)}$ and $\tilde{x}_{(1, i-1)}$. 

We remind that $x_i$ is the probability that type A fixates when there are initially $i$ nodes of type A that are selected uniformly at random. There are $\binom{N-1}{i-1}$ states in which $i-1$ nodes have type A and the hub has type A. There are $\binom{N-1}{i}$ states in which $i$ nodes have type A and the hub has type B. Therefore, we obtain
\begin{linenomath}
\begin{align}\label{average_over}
x_i =& \frac{\binom{N-1}{i-1}} {\binom{N-1}{i-1} + \binom{N-1}{i}} \tilde{x}_{(1, i-1)}
+ \frac{\binom{N-1}{i}} {\binom{N-1}{i-1} + \binom{N-1}{i}} \tilde{x}_{(0, i)}\notag\\
=& \frac{i}{N}\tilde{x}_{(1, i-1)}+\frac{N-i}{N}\tilde{x}_{(0, i)}.
\end{align}
\end{linenomath}

We analytically solve Eqs.~\eqref{veceqstar} and \eqref{average_over} to obtain
\begin{equation}\label{eq:star_N=4}
x_1 = \frac{r^2(3r+5)}{4(3r^3+14r^2+18r+9)}+\frac{9r^2}{4(3r^2+5r+3)}
\end{equation}
for $N=4$ and
\begin{linenomath}
\begin{align}
x_1 =& \frac{r^2(72r^3+202r^2+145r+6)}{5(72r^5+490r^4+1025r^3+1070r^2+880r+288)}\nonumber\\
&+\frac{4r^2(72r^2+94r+4)}{5(72r^4+202r^3+217r^2+202r+72)} \label{eq:star_N=5}
\end{align}
\end{linenomath}
for $N=5$. We show $x_i$ for $i \in \{2, \ldots, N-1\}$, $N \in \{4, 5\}$ in Text A in S1 File. We compare Eqs.~\eqref{eq:star_N=4} and \eqref{eq:star_N=5} with $x_1$ for the Moran process with $N=4$ and $N=5$ in Fig~\ref{fig:m1_comparison}A and \ref{fig:m1_comparison}B, respectively. We find that these star 3-uniform hypergraphs are suppressors of selection. We then computed $x_1$ by numerically solving Eq.~\eqref{veceqstar} for $N=20$ and $N=200$. Fig~\ref{fig:m1_comparison}C and \ref{fig:m1_comparison}D compare the obtained $x_1$ values with those for the Moran process with $N=20$ and $N=200$, respectively. These figures indicate that these larger star 3-uniform hypergraphs are also suppressors of selection although the degree of suppression is smaller for larger star 3-uniform hypergraphs; Fig~\ref{fig:m1_comparison}D shows that the difference between the star 3-uniform hypergraph and the Moran process in terms of $x_1$ is tiny when $N=200$. Fig~\ref{fig:m1_comparison} suggests that larger star 3-uniform hypergraphs are weaker suppressors of selection.
We also verified that the star 3-uniform hypergraph with $N=1500$ nodes is also a suppressor of selection although the suppressing effect is even weaker (see Text H in S1 File).
 This result is consistent with the analytically known result that larger conventional star graphs are stronger amplifiers of selection \cite{Lieberman2005nature, Pavlogiannis2018CommBiol}. Therefore, we conclude based on our numerical results that the star 3-uniform hypergraph is likely to be a suppressor of selection for any $N$.

\subsubsection{Comparison with conventional networks obtained by one-mode projection\label{sub:one-mode-star}}

A lossy representation of a hypergraph as a conventional network is the one-mode projection, in which two nodes are adjacent by an edge if and only if they belong to at least one common hyperedge. A weighted network version of the one-mode projection defines the edge weight by the number of hyperedges that the two nodes share. The one-mode projection is a common approach for analyzing dynamics on hypergraphs including evolutionary dynamics
\cite{Perc2013JRSoc}.
The one-mode projections of the complete and cyclic 3-uniform hypergraphs are regular networks (i.e., networks in which all the nodes have the same degree), in the case of both unweighted and weighted variants of the one-mode projection. Then, the isothermal theorem \cite{Lieberman2005nature} guarantees that the birth-death process on the one-mode projections of the complete and cyclic 3-uniform hypergraphs is equivalent to the Moran process. Therefore, our result that the complete and cyclic 3-uniform hypergraphs are suppressors of selection is not an artifact of one-mode projection. 

The same argument does not apply for the star 3-uniform hypergraph because the one-mode projection of the star 3-uniform hypergraph is not a regular network. Furthermore, the star 3-uniform hypergraph are only analogously similar to the conventional star graph.
In fact, leaf nodes are adjacent to each other in the star 3-uniform hypergraph, whereas they are not in the conventional star graph. 
Then, the direct connection between leaf nodes might be a reason why the star 3-uniform hypergraph is a suppressor of selection.
To exclude this possibility, using similar analytical techniques, we investigated the fixation probability for the weighted one-mode projection of the star 3-uniform hypergraph, which is a weighted complete graph. Note that the unweighted one-mode projection of the star 3-uniform hypergraph is the unweighted complete graph, which is trivially equivalent to the Moran process. As we show in Text I in S1 File,
we have found that the obtained weighted complete graph is an amplifier of selection although the amplifying effect is weak. Therefore, our result that the star 3-uniform hypergraph is a suppressor of selection is not expected from the one-mode projection.

\subsection{Model 2\label{sub:model2}}

In this section, we analyze the fixation probability for the birth-death process governed by model 2 on the complete, cyclic, and star 3-uniform hypergraphs.

\subsubsection{Complete 3-uniform hypergraph\label{sub:complete-model2}}

On the complete 3-uniform hypergraph, the state $i$ either remains unchanged or moves to $i-1$ or $i+1$ in a single time step because all the hyperedges are composed of three nodes such that there are at most one node that changes the state under the majority rule given by model 2. At state $i$, the probability that a node of type A and B is selected as parent is equal to $ri/(ri+N-i)$ and $(N-i)/(ri+N-i)$, respectively. If the parent is of type A, then the following three types of events are possible. First, if a hyperedge containing the parent and two nodes of type B is used for reproduction with probability $\binom{N-i}{2}/\binom{N-1}{2}$, then the state does not change.
Second, if a hyperedge containing the parent, a different node of type A, and a node of type B is used for reproduction with probability
$\binom{i-1}{1}\binom{N-i}{1}/\binom{N-1}{2}$, then the state moves from $i$ to $i+1$.
Third, if a hyperedge containing the parent and two other nodes of type A is used for reproduction with the remaining probability, then the state does not change.
If the parent is of type B, then the following three types of events are possible. First, if a hyperedge containing the parent and two nodes of type A is used for reproduction with probability $\binom{i}{2}/\binom{N-1}{2}$, then the state does not change. Second, if a hyperedge containing the parent, a node of type A, and a different node of type B is used for reproduction with probability $\binom{i}{1}\binom{N-i-1}{1}/\binom{N-1}{2}$, then the state moves from $i$ to $i-1$. Third, if a hyperedge containing the parent and two other nodes of type B is used for reproduction with the remaining probability, then the state does not change. Therefore, the transition probabilities are given by
\begin{linenomath}
\begin{align}
p_{0,0}&=1,\\
p_{N,N}&=1,\\
p_{i,i-1}&=\frac{N-i}{ri+N-i}\cdot\frac{\binom{i}{1}\binom{N-i-1}{1}}{\binom{N-1}{2}}=\frac{N-i}{ri+N-i}\cdot\frac{2i(N-i-1)}{(N-1)(N-2)}, \quad i\in \{1, \ldots, N\},\\
p_{i,i+1}&=\frac{ri}{ri+N-i}\cdot\frac{\binom{i-1}{1}\binom{N-i}{1}}{\binom{N-1}{2}}=\frac{ri}{ri+N-i}\cdot\frac{2(i-1)(N-i)}{(N-1)(N-2)}, \quad i\in \{0, \ldots, N-1\},\\
p_{i,i}&=1-p_{i,i-1}-p_{i,i+1}, \quad i\in \{1, \ldots, N-1\}.
\end{align}
\end{linenomath}
All the other entries of $P$ are equal to zero. Therefore, $P$ is a tridiagonal matrix. We note that the states $i=0$ and $i=N$ are absorbing states.

The fixation probability of type A starting from state $i$, i.e., $x_i$, satisfies
\begin{linenomath}
\begin{align}
&x_0=x_1=0,
\label{x0}\\
&x_i=p_{i,i-1}x_{i-1}+p_{i,i}x_{i}+p_{i,i+1}x_{i+1}, \quad i\in \{2, \ldots, N-2\},
\label{eq:recursive}\\
&x_{N-1}=x_N=1.
\label{xn}
\end{align}
\end{linenomath}
Similar to the analysis of the fixation probability for the Moran process \cite{Nowak2006book}, we set
\begin{equation}
y_i \equiv x_i-x_{i-1}, \quad i\in \{1, \ldots, N\}.
\end{equation}
Note that $\sum_{i=1}^N y_i = x_N-x_0=1$. Let $\gamma_i=p_{i,i-1}/p_{i,i+1}$. Equation~\eqref{eq:recursive} leads to $y_{i+1}=y_i \gamma_i$ with $i\in \{2, \ldots, N-2\}$. Therefore,  we obtain $y_1=0$, $y_2=x_2$, $y_3=x_2 \gamma_2$, $y_4= x_2 \gamma_2\gamma_3$, $\ldots$, $y_{N-1} = x_2 \prod_{k=2}^{N-2} \gamma_k$.
By summing all these expressions and using $y_N = 0$, which Eq.~\eqref{xn} implies, we obtain
\begin{linenomath}
\begin{align}
1=\sum_{i=1}^N y_i&=x_2\left(1+\gamma_2+\gamma_2\gamma_3+\cdots+\prod_{k=2}^{N-2} \gamma_k\right) \notag\\
&=x_2\left[1+\frac{N-3}{r}+\frac{N-4}{2r}\frac{N-3}{r}+\cdots+\frac{1}{r(N-3)}\frac{2}{r(N-4)}\cdots\frac{N-4}{2r}\frac{N-3}{r}\right] \notag\\
%
%
&=x_2\sum_{i=0}^{N-3}r^{-i}\binom{N-3}{i} \notag\\
&=x_2\left(1+\frac{1}{r}\right)^{N-3}.
\end{align}
\end{linenomath}
Therefore, we obtain
\begin{equation}
x_2=\left(1+\frac{1}{r}\right)^{3-N}
\label{eq:model2-complete-x2}
\end{equation}
and
\begin{linenomath}
\begin{align}
x_i =& x_2\left(1+\sum_{j=2}^{i-1}\prod_{k=2}^j \gamma_k\right)\notag\\
& = \left(1+\frac{1}{r}\right)^{3-N}\left(1+\sum_{j=2}^{i-1}\prod_{k=2}^j \gamma_k\right), \quad i\in \{3, 4, \ldots, N-2\}.
\end{align}
\end{linenomath}
This expression does not simplify further.

For model 2, it always holds true that $x_1 = 0$ because the evolutionary dynamics is driven by a majority rule. Therefore, we compare $x_2$, instead of $x_1$, as a function of $r$ with $x_2$ for the Moran process, to examine whether a given hypergraph is an amplifier of selection, suppressor of selection, equivalent to the Moran process, or neither. We compare $x_2$ calculated from Eq.~\eqref{eq:model2-complete-x2} with that for the Moran process at four values of $N$ in Fig~\ref{fig:m2_comparison}. The figure indicates that the complete 3-uniform hypergraph with 4 nodes is a suppressor of selection. However, the complete 3-uniform hypergraph with $N > 4$ is neither amplifier nor suppressor of selection. This is because Eq.~\eqref{eq:model2-complete-x2} implies that $x_2=2^{3-N}$ when $r=1$, which is different from the result for the Moran process, i.e., $x_2=2/N$, when $N > 4$.

\floatsetup[figure]{style=plain,subcapbesideposition=top}
\captionsetup{font={small,rm}} 
\captionsetup{labelfont=bf}
\begin{figure}[H]
  \centering
  \includegraphics[width=0.9\linewidth]{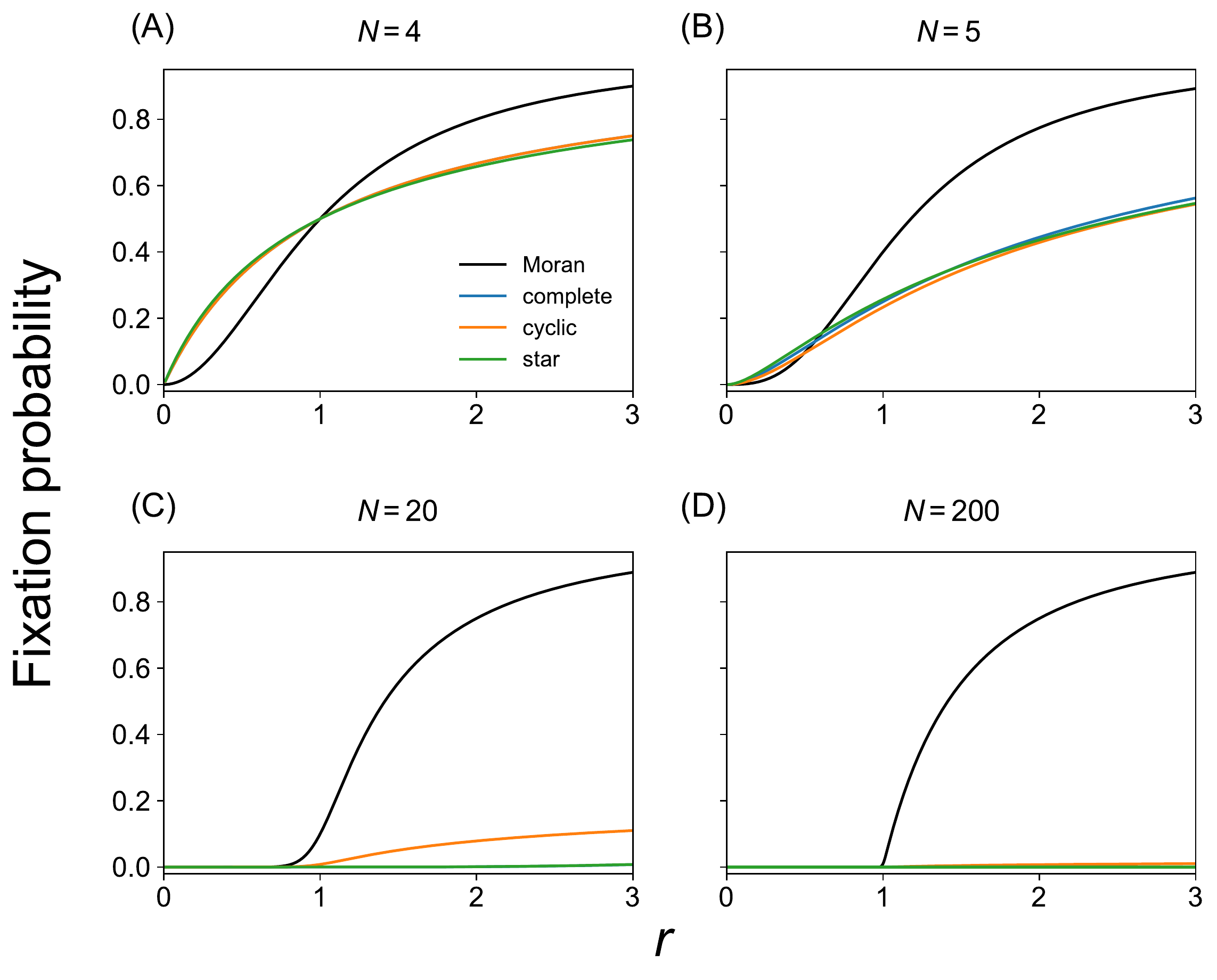}
   \caption{Fixation probability for different hypergraph models under model 2. We compare the Moran process, which is the baseline, complete 3-uniform hypergraphs, cyclic 3-uniform hypergraphs, and star 3-uniform hypergraphs. (A) $N=4$. (B) $N=5$. (C) $N=20$. (D) $N=200$. In (A), the result for the complete 3-uniform hypergraph is exactly the same as that for the cyclic 3-uniform hypergraph such that the blue line is completely hidden behind the orange line. In (C) and (D), the results for the complete 3-uniform hypergraph (shown by the blue lines) are not identical but close to those for the star hypergraph (shown by the green lines) such that the former are hidden behind the latter.}
   \label{fig:m2_comparison}
\end{figure}

\subsubsection{Cyclic 3-uniform hypergraph\label{sub:cyclic-model2}}

Consider the evolutionary dynamics under model 2 on the cyclic 3-uniform hypergraph. Assume that there are initially two nodes of type A, which are distributed uniformly at random, and $N-2$ nodes of type B. We derived in~Text J in S1 File the fixation probability for type A, $x_2$, using techniques similar to those used for the complete 3-uniform hypergraph (section~\ref{sub:complete-model2}). We obtain
\begin{equation}
x_2 = \begin{cases}
\frac{2}{N-1}\left\{\frac{1-r^{-1}}{1-r^{-(N-2)}}+\frac{r-r^{-1}}{(r+4)\left[1-r^{-(N-2)}\right]}\right\}
& (N\geq 5),\\[1mm]
\frac{r}{1+r} & (N=4).
\end{cases}
\label{eq:model2-cyclic-x2}
\end{equation}
We show $x_3$ for $N = 5$ in Text A in S1 File; note that $x_2$ for $N \in \{4, 5\}$ and $x_3$ for $N = 5$ are the only nontrivial $x_i$ values for $N \in \{4, 5 \}$ because $x_1=0$ and $x_{N-1}=1$ for model 2.
 
We compare $x_2$ given by Eqs.~\eqref{eq:model2-cyclic-x2} with $x_2$ for the Moran process at four values of $N$ in Fig~\ref{fig:m2_comparison}. The figure indicates that the cyclic 3-uniform hypergraph with $N=4$ is a suppressor of selection. However, the cyclic 3-uniform hypergraph with $N>4$ is neither amplifier nor suppressor of selection under model 2. This is because $x_2=14/\left[5(N-1)(N-2)\right]$ when $r=1$ for cyclic 3-uniform hypergraphs. This $x_2$ value is different from the value for the Moran process, i.e., $x_2=2/N$.

\subsubsection{Star 3-uniform hypergraph}

We calculate the fixation probability for model 2 on the star 3-uniform hypergraph. As in the case of model 1, we exploit the fact that the combination of the type of the hub (i.e., either A or B) and the number of leaf nodes of type A, which ranges between $0$ and $N-1$, completely specifies the state of the evolutionary dynamics on this hypergraph. We show the derivation of the fixation probability in Text K in S1 File.

We obtain
\begin{equation}\label{eq:m2_star_N=4}
x_2=\frac{5r}{2(5r+3)}+\frac{3r}{2(3r+5)}
\end{equation}
for $N=4$ and
\begin{equation}\label{eq:m2_star_N=5}
x_2=\frac{6(r^3+3r^2)}{5(3r^3+13r^2+15r+4)}+\frac{3r^2}{5(r^2+3r+1)}
\end{equation}
for $N=5$. We show $x_3$ for $N = 5$ in Text A in S1 File.

We compare Eqs.~\eqref{eq:m2_star_N=4} and \eqref{eq:m2_star_N=5} with $x_2$ for the Moran process with $N=4$ and $N=5$ in
Fig~\ref{fig:m2_comparison}A and \ref{fig:m2_comparison}B, respectively. We find that the star 3-uniform hypergraph with $N=4$ is a suppressor of selection. In contrast, the star 3-uniform hypergraph with $N=5$ is neither an amplifier nor suppressor of selection because $x_2=9/35$ at $r=1$, which is different from the corresponding result for the Moran process, i.e., $x_2=2/5$. We also obtained $x_2$ for $N=20$ and $N=200$ by numerically solving a system of linear equations with $2N-2$ unknowns (see Text K in S1 File). Fig~\ref{fig:m2_comparison}C and \ref{fig:m2_comparison}D, which compare the obtained $x_2$ values with $x_2$ for the Moran process for $N=20$ and $N=200$, respectively, indicate that these larger star 3-uniform hypergraphs are neither amplifiers nor suppressors of selection. Therefore, we conclude that star 3-uniform hypergraphs are neither amplifiers nor suppressors of selection for $N \ge 5$ under the evolutionary dynamics described by model 2.

\subsection{Neutral drift}\label{neutral}

The discussion of amplifier and suppressor of selection requires $x_1 = 1/N$ (or $x_2 = 2/N$ in the case of our model 2) when $r=1$.
In this section, we prove the following theorem, which justifies such discussion for model 1 and not for model 2.
\begin{theorem}
Consider the neutral drift, i.e., $r=1$. When there are initially $i$ mutants (i.e., type A) that are distributed uniformly at random over the $N$ nodes, the fixation probability for the mutant is equal to $i/N$.
\end{theorem}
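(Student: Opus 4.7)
The plan is to reduce the statement to a symmetry computation by lifting the two-type dynamics to an $N$-type dynamics in which every node initially carries its own distinct label. When $r=1$, the parent is selected uniformly over $V$ regardless of types, and the parent's subsequent choice among its hyperedges is also label-blind, so for any fixed initial mutant set $S \subset V$ with $|S|=i$ one can couple the two-type process to the $N$-type process step by step: use the same parent and the same hyperedge in both, and let every non-parent node in the chosen hyperedge copy the parent's label in its respective process. A one-step check shows that the invariant ``a node has type A in the two-type process if and only if its $N$-type label belongs to $S$'' is preserved throughout.

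Next I would verify that the $N$-type dynamics is absorbed almost surely at one of the $N$ monochromatic configurations. Since these configurations are trapping and connectedness of $H$ implies that from any configuration there is a finite sequence of parent-and-hyperedge choices of positive probability reaching some monochromatic state, a standard Markov-chain argument yields almost-sure absorption. Writing $\pi_v$ for the absorption probability at ``all label $v$'' starting from the initial configuration in which each node carries its own label, certainty of absorption gives $\sum_{v \in V} \pi_v = 1$.

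By the coupling, when the initial mutant set is $S$, type A fixates if and only if the surviving label in the $N$-type process belongs to $S$, which happens with probability $\sum_{v \in S} \pi_v$. Averaging over $S$ drawn uniformly from the $\binom{N}{i}$ size-$i$ subsets of $V$ and swapping the order of summation,
\[
x_i \;=\; \frac{1}{\binom{N}{i}} \sum_{\substack{S \subset V \\ |S|=i}} \sum_{v \in S} \pi_v \;=\; \sum_{v \in V} \pi_v \cdot \Pr(v \in S) \;=\; \frac{i}{N} \sum_{v \in V} \pi_v \;=\; \frac{i}{N},
\]
which is the claim.

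The main obstacle is the almost-sure absorption step on an arbitrary connected hypergraph. The cleanest route, I expect, is to fix any node $u$ and to build, from any non-monochromatic configuration, an explicit bounded-length sequence of parent-and-hyperedge choices of positive probability that ends with every node carrying label $u$; the connectedness assumption---that every non-trivial bipartition of $V$ is crossed by some hyperedge---is precisely what is needed to construct such a sequence by iteratively growing the set of nodes already carrying label $u$.
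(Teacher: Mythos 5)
Your argument is essentially the paper's own proof: the $N$-type lift with distinct neutral labels is exactly the paper's auxiliary ``model 1','' your $\pi_v$ are its fixation probabilities $q_i$, and the identification of type-A fixation with the surviving label lying in the initial mutant set $S$, followed by averaging over uniform $S$, is the same decomposition the paper uses to get $x_i=i/N$. Your proposal is correct; the only difference is that you explicitly justify almost-sure absorption (hence $\sum_v \pi_v=1$), a step the paper asserts without proof.
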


\begin{proof}
Consider model 1', which is defined by the same evolutionary dynamics as that for model 1 with the initial condition in which each node $i$ is occupied by a distinct neutral type $\tilde{A}_i$. Therefore, there are initially $N$ types, all of which have the constant fitness equal to 1. The evolutionary dynamics terminates when a single type fixates. We denote by $q_i$ the fixation probability for type $\tilde{A}_i$ under the aforementioned initial condition. It should be noted that $\sum_{i=1}^N q_i = 1$.

Now we consider the original model 1 with the initial condition in which there is just one mutant located at node $i$. Then, the fixation probability for the mutant is equal to $q_i$. This is because model 1' is reduced to model 1 with the present initial condition if we regard type $\tilde{A}_i$ as the mutant type and all the $N-1$ types $\tilde{A}_j$, where $j\neq i$, as the resident type. We recall that $x_1$ is the average of the fixation probability over the $N$ initial conditions in which there is just one mutant. Therefore, we obtain $x_1 = \sum_{i=1}^N q_i / N = 1/N$.

Next we consider model 1 with the initial condition in which there are two mutants whose locations are selected uniformly at random. Assume that the two mutants are initially located at nodes 1 and 2. The mutant type fixates with probability $q_1 + q_2$ because model 1' is reduced to model 1 with the present initial condition if we regard type $\tilde{A}_1$ and $\tilde{A}_2$ as the mutant type and all the other $N-2$ types $\tilde{A}_j$, where $j\neq 1, 2$, as the resident type. Therefore, we obtain
$x_2 = \sum_{i=1}^N \sum_{j=1}^{i-1}(q_i + q_j) / \binom{N}{2} = (N-1)\sum_{i=1}^N q_i / \binom{N}{2} = 2/N$.

Similarly, we obtain $x_3 = \sum_{i=1}^N \sum_{j=1}^{i-1} \sum_{k=1}^{j-1} (q_i + q_j + q_k) / \binom{N}{3} = \binom{N-1}{2} \sum_{i=1}^N q_i / \binom{N}{3} = 3/N$.
It is straightforward to verify $x_i = i/N$ for the other $i$ values.
\end{proof}

\begin{remark}
This theorem is known for the birth-death process on conventional networks with the proof being unchanged
\cite{DonnellyWelsh1983MPCPS,MasudaOhtsuki2009NewJPhys,Broom2010ProcRSocA}.
\end{remark}

\begin{remark}\label{rem:model2-neutral}
The theorem does not hold true for model 2. This is because, in model 2, whether the parent node $u$ propagates its type to the other nodes in the selected hyperedge $e$ containing $u$ depends on the other nodes belonging to $e$. As an example, consider the case in which $e$ contains three nodes, $u$, $v_1$, and $v_2$. Parent $u$ imposes its type on $v_1$ and $v_2$ only when either $v_1$ or $v_2$ has the same type as $u$. Otherwise, i.e., if both $v_1$ and $v_2$ are of the opposite type as $u$'s, then no state change occurs after $u$ is selected as parent and hyperedge $e$ is selected. Model 1' cannot handle this situation. In model 1', the parent disseminates its type to all the other nodes in the selected hyperedge $e$, as in model 1, regardless of the type of the other nodes belonging to $e$. Therefore, we cannot map model 1' to model 2. In fact, Fig~\ref{fig:m2_comparison} shows that $x_2 \neq 2/N$ in most cases, verifying that Theorem 1 does not hold true in general.
\end{remark}

\section{Numerical results for empirical hypergraphs\label{sec:numerical}}

In this section, we carry out numerical simulations for empirical hypergraphs. The present study is motivated by the result that most networks are amplifiers of selection under the birth-death process \cite{Hindersin2015PLOS,  Cuesta2018PlosOne, Allen2021PlosComputBiol}. In section~\ref{sub:model1}, we showed that three model hypergraphs are suppressors of selection under model 1. In section~\ref{sub:model2}, we argued that one cannot discuss whether the same hypergraphs are amplifiers or suppressors of selection under model 2 except for the complete, cyclic, and star 3-uniform hypergraphs with $N=4$ nodes. This is because model 2 does not respect $x_2 = 2/N$ in general (see Remark~\ref{rem:model2-neutral}). Therefore, we focus on model 1 in this section and examine whether empirical hypergraphs tend to be suppressors of selection, as is the case for the complete, cyclic, and star 3-uniform hypergraphs.

Empirical, or general, hypergraphs are distinct from the complete, cyclic, and star 3-uniform hypergraphs in two main aspects. First, in general, empirical hypergraphs do not have much symmetry that we can exploit to simplify the probability transition matrix $P$. Therefore, pursuing analytical solutions, which would involve the solution of a linear system with $2^N-2$ unknowns, is formidable unless $N$ is small. Second, empirical hypergraphs contain hyperedges of different sizes, whereas the 3-uniform hypergraphs only have hyperedges of size 3. Therefore, in this section, we numerically examine the stochastic evolutionary dynamics on empirical hypergraphs. In each time step, we select a parent node with the probability proportional to its fitness from all the $N$ nodes. Then, the parent propagates its type to all the other nodes in a hyperedge $e$ to which the parent belongs. We select $e$ uniformly at random from all the hyperedges to which the parent belongs regardless of the size of the hyperedge.

To calculate the fixation probability of a single mutant of type A for an arbitrary hypergraph and for each value of $r$, we run the birth-death process until type A or B fixates for each node $v$ initially occupied by type A. Note that all the $N-1$ nodes except $v$ are initially of type B. For each $v$, we run $3 \times 10^3$ simulations for $r\geq 1$ and $4 \times 10^4$ simulations for $r<1$. We use a substantially larger number of simulations for $r<1$ than $r\geq 1$ because the fixation probability is small when $r$ is small and therefore it can be estimated with a higher accuracy with more simulations.
For a given value of $r$, we estimate the fixation probability of type A as a fraction of the runs in which type A has fixated among the $3 \times 10^3 \times N$ or $4 \times 10^4 \times N$ simulations; the factor $N$ is due to the $N$ different choices of $v$.

We examine the fixation probability on four empirical hypergraphs. The corporate club membership hypergraph (corporate hypergraph for short) contains 25 nodes and 15 hyperedges~\cite{Faust1997SocNetw, Kunegis2013Konect}. Each node represents a corporate executive officer. Each hyperedge represents a social organization such as clubs and boards of which some officers are members. The Enron hypergraph is an email communication network and has 143 nodes and 10,885 hyperedges~\cite{Klimt2004ProcEurConf, Benson2018ProcNatAcad}. Each node represents an email address at Enron. Each hyperedge is comprised of all recipient addresses of an email. The Senate committee hypergraph (Senate hypergraph for short) has 282 nodes and 315 hyperedges. Each node is a member of the US Senate. Each hyperedge represents committee memberships \cite{CCA2017Senate, Chodrow2021SciAdv}. The high-school hypergraph is a social contact network with 327 nodes and 7,818 hyperedges. Each node is a student. Each hyperedge represents to a group of students that were in proximity of one another \cite{Chodrow2021SciAdv, Mastrandrea2015PLOS}. All the four hypergraphs are connected hypergraphs.

In Fig~\ref{fig:em-hypergraph}, we show the relationships between the fixation probability for a single mutant as a function of $r$, for the four empirical hypergraphs, one per panel. We find that all the hypergraphs are suppressors of selection. We also simulate the birth-death process on the weighted one-mode projection of each empirical hypergraph. We find that the fixation probability for the obtained weighted network is almost the same as that for the Moran process for the corporate (Fig~\ref{fig:em-hypergraph}A), Senate (Fig~\ref{fig:em-hypergraph}C), and high-school (Fig~\ref{fig:em-hypergraph}D) hypergraphs. The one-mode projection of the Enron hypergraph is a suppressor of selection (Fig~\ref{fig:em-hypergraph}B). However, the fixation probability for the one-mode projection of the Enron hypergraph as a function of $r$ is close to that for the Moran process. In contrast, the original Enron hypergraph is a much stronger suppressor of selection. Therefore, our result that the four empirical hypergraphs are suppressors of selection is not expected from the one-mode projection.

To further examine the result that the empirical hypergraphs are suppressors of selection, we now simulate the same evolutionary dynamics on the randomized hypergraph. We obtained the randomized hypergraph for each empirical hypergraph by randomly shuffling the hyperedges of the original hypergraph. In the random shuffling, we preserved the degree of each node and the size of each hyperedge \cite{Chodrow2020J.ComplexNetw, Nakajima2022IEEE}. We show the fixation probability for the randomized hypergraphs by the green circles in Fig~\ref{fig:em-hypergraph}. We find that the randomized hypergraph is also a suppressor of selection for all the four hypergraphs.
The randomized hypergraph is less suppressing than the original hypergraph for three empirical hypergraphs (see Fig~\ref{fig:em-hypergraph}A-C) and vice versa for the other one empirical hypergraph (see Fig~\ref{fig:em-hypergraph}D). Therefore, how the features of empirical hypergraphs except for the distribution of the node's degree and hyperedge's size affects the fixation probability is non-equivocal. However, these results for the randomized hypergraphs further strengthen our main claim that the hypergraphs that we have investigated are suppressors of selection under model 1.

\floatsetup[figure]{style=plain,subcapbesideposition=top}
\captionsetup{font={small,rm}} 
\captionsetup{labelfont=bf}
\begin{figure}[H]
  \centering
  \includegraphics[width=1.0\linewidth]{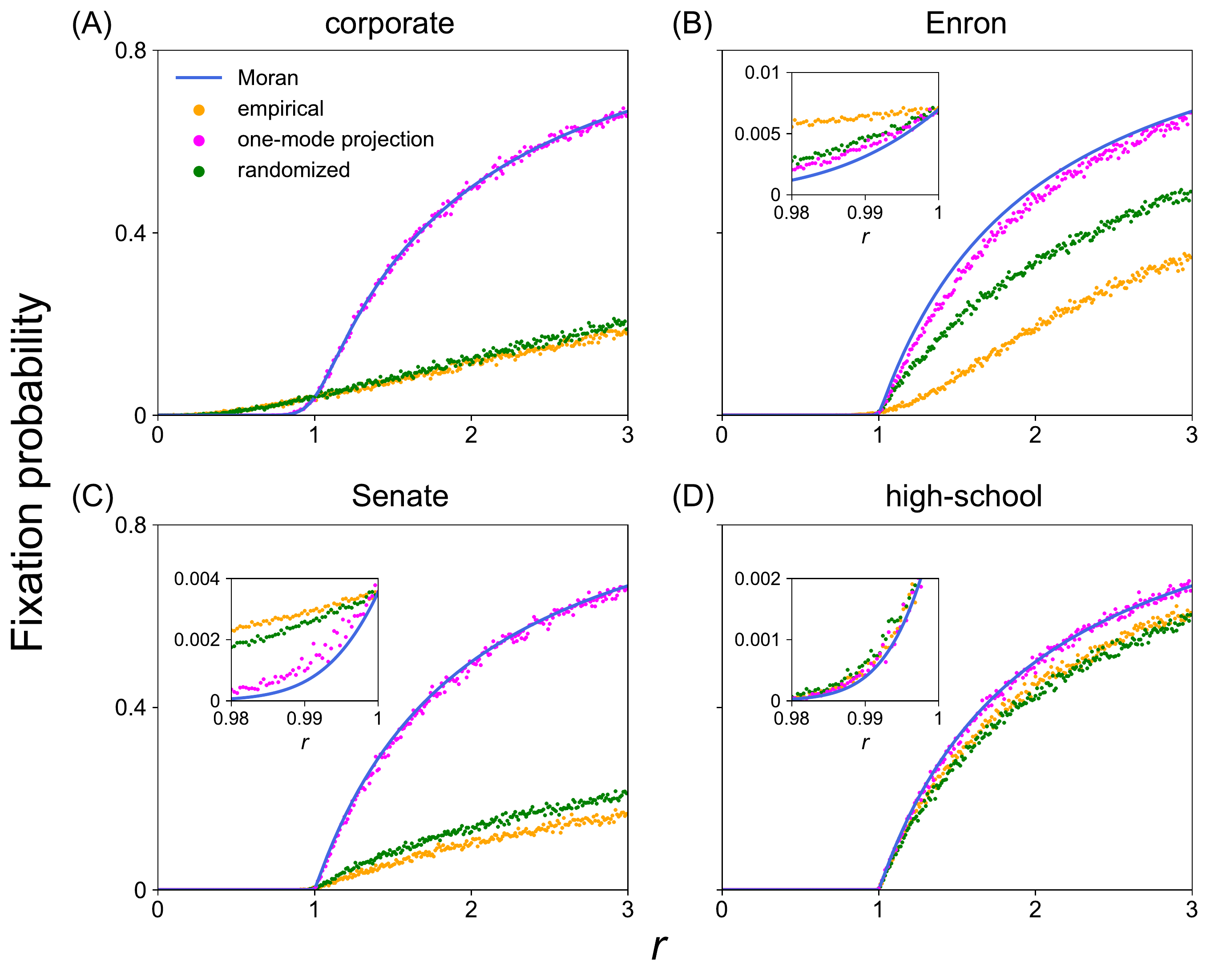}
\caption{Fixation probability for empirical hypergraphs, their one-mode projection, and the randomized hypergraphs. (A) Corporate. (B) Enron. (C) Senate. (D) High-school. The insets of (B), (C), and (D) magnify the results for $r$ values smaller than and close to $r=1$. We estimated the fixation probability at each value of $r$ as the fraction of runs in which fixation of type A is reached. We find that empirical and randomized hypergraphs are suppressors of selection for all the four data sets.
}
\label{fig:em-hypergraph}
\end{figure}

\section{Discussion}

We have proposed two models of evolutionary dynamics on hypergraphs that are extensions of the birth-death process on networks.
For both models of evolutionary dynamics, we semi-analytically derived the fixation probability for the mutant type under the constant selection for three synthetic hypergraphs with high symmetry, which generalize the complete graph, cycle graph, and star graph. For model 1, which is appropriate for discussing the strength of natural selection, we showed that these synthetic hypergraphs are suppressors of selection. Furthermore, by numerical simulations, we have shown that four empirical hypergraphs of our arbitrary choices are also suppressors of selection. Our results are in stark contrast to the known result that most networks are amplifiers of selection under the birth-death updating rule and the uniform initialization \cite{Hindersin2015PLOS, Cuesta2018PlosOne, Allen2021PlosComputBiol}, which we also assumed. It is often the case that interaction among individuals is often of higher order, and hypergraphs rather than conventional networks are more direct representation of many empirical data of social and ecological interactions \cite{Battiston2020PhyRep,Bianconi2021book}. Therefore, amplification of natural selection by birth-death processes on networks may not be as universal as it has been suggested once we expand the class of network models to be considered.

For conventional networks, finding suppressors of selection under the birth-death process is difficult \cite{Cuesta2017PlosOne}. In contrast, under the death-birth process, most conventional networks are suppressors of selection \cite{Hindersin2015PLOS}, and amplification of selection is bounded and transient~\cite{Tkadlec2020PlosComputBiol}. Our main result that all the hypergraphs that we have investigated are suppressors of selection under model 1 begs various related research questions. Is there a theoretical bound on amplification of selection in birth-death processes on hypergraphs? Is there a systematically constructed class of amplifying hypergraphs even if they are rare? Are hypergraphs suppressors of selection under the death-birth process? If it is the case, hypergraphs are even more suppressing under the death-birth than birth-death processes? Can we find an optimal amplifier \cite{Galanis2017JAcm,Pavlogiannis2018CommBiol,Goldberg2019TheorComputSci} or suppressor of selection for hypergraphs? We save these questions for the future.

As has been done in other studies, we used the Moran process as the baseline for judging the amplifying and suppressing effects of selection. An alternative is to use a hypergraph equivalent of the complete graph as the baseline. The complete $k$-uniform hypergraph is a candidate. 
Fig~\ref{fig:m1_comparison} and Text H in S1 File imply that, relative to the complete 3-uniform hypergraph and under model 1, the star 3-uniform hypergraphs with $N \in \{ 4, 5, 20, 200, 1500 \}$ are amplifiers of selection, and the cyclic 3-uniform hypergraph is neither an amplifier or suppressor of selection when $N \in \{5, 20, 200 \}$ (and equivalent to the complete 3-uniform hypergraph when $N=4$). Therefore, the cyclic 3-uniform hypergraph with $N \in \{5, 20, 200 \}$ is not an isothermal graph relative to the complete 3-uniform hypergraph, whereas the conventional circle graph is an isothermal graph relative to the Moran process. The result may change if we use a different baseline hypergraph such as the one that contains all the possible hyperedges whose size is at most $k$ or the complete $k$-uniform hypergraph with a different $k$ value. We may want to use a baseline satisfying that cyclic uniform hypergraphs are isothermal graphs relative to the baseline.
Our choice of the Moran process as the baseline is to avoid these difficulties.
How to determine the baseline hypergraph for discussing amplification and suppression of selection is an open question.

Metapopulation models assume networks in which a node is a patch containing individuals. Individuals either travel from a patch to an adjacent one to another according to a mobility rule such as a simple random walk or do not move but interact with the other individuals in the same patch more frequently than those in adjacent patches. Extending the island models in which the patches are connected as the complete graph \cite{Maruyama1970GenetRes}, constant-selection dynamics in metapopulation network models have been investigated \cite{Yagoobi2023JRSoc}. Metapopulation models and hypergraphs are apparently similar because the entire network is composed of interacting groups of individuals (i.e., patches or hyperedges) in both models. Recent studies showed that metapopulation models can be suppressors of selection \cite{Yagoobi2021SciRep,Marrec2021PRL}. However, constant-selection dynamics in metapopulation models and hypergraphs are crucially different in the following two aspects. First, interaction between individuals is pairwise in metapopulation models. In contrast, our hypergraph models assume group interaction, allowing multiple individuals to simultaneously change their type. Second, metapopulation models assume that within-group interaction between individuals is more likely than between-group interaction. This is not the case for hypergraphs in general although one can enforce it by, for example, allowing duplicated or weighted hyperedges representing a relatively high frequency of within-group interaction. Therefore, we consider that the hypergraphs used in the present study, and potentially many others, are suppressors of selection for different reasons from those for suppressing metapopulation networks, such as the possibility of multiple individuals to be simultaneously updated in the case of the hypergraph.

Evolutionary set theory is a mathematical framework with which to analyze coevolution of the strategy (i.e., type) of the node and the membership of the node to sets (i.e., groups) \cite{Tarnita2009PNAS,Nowak2010PhilTransRSocB,FuTarnita2012SciRep}. It assumes that each node belongs to different groups and play games with other nodes belonging to the same group. A node $v$ with a large fitness obtained from playing the game disseminates $v$'s group membership as well as $v$'s type to other nodes with a high probability. Therefore, evolutionary set theory is a dynamic graph theory. Evolutionary set theory is distinct from evolutionary dynamics on hypergraphs studied in the present study in the following aspects. First, in evolutionary set theory, interaction between players is pairwise by default. In contrast, in evolutionary dynamics on hypergraphs, the outcome of an interaction in a group (i.e., hyperedge) may not be decomposed into the summation of pairwise interaction in the group. For example, we assumed that a parent node $v$ simultaneously imposes its type on all the other nodes in the selected hyperedge. Second, the group membership evolves in evolutionary set theory. In contrast, it is fixed in the evolutionary dynamics considered in this study, which is a simplification assumption. Third, reproduction in evolutionary set theory occurs globally, not limited to between nodes in the same group. In contrast, we have assumed that reproduction occurs between nodes in the same hyperedge. These differences create opportunities for future work. For example, extending the present model to the case of dynamic hypergraphs may be interesting.

For conventional networks, the time to fixation has been shown to be in a tradeoff relationship with the fixation probability \cite{Frean2013ProcRSocB,Moller2019CommBiol,Tkadlec2019CommBiol, Tkadlec2021NatComm, Kuo2021biorxiv}. 
%
%
%
%
%
In other words, a strongly amplifying network tends to accompany a large mean fixation time. Our mathematical and computational framework is readily adaptable to the examination of fixation times. The fixation time for hypergraphs including the comparison with the case of conventional networks warrants future work. Other topics for further investigations include the effects of different initial conditions \cite{Antal2006PhysRevLett,MasudaOhtsuki2009NewJPhys,Adlam2015ProcRSocA,XiaoWu2019PlosComputBiol}, mathematical analyses of other symmetric hypergraphs, weak selection expansion of fixation probability \cite{Allen2021PlosComputBiol} in the case of hypergraphs, amplification and suppression of selection in the mutation-selection equilibrium under a small mutation rate \cite{SharmaTraulsen2022Pnas}, and fixation probability of cooperation in evolutionary game dynamics on hypergraphs~\cite{Santos2008nature, Rodriguez2021natHumBehav}. Focusing on fixation of a mutant (or resident) type in general allows us to use Markov chain theory to reach various mathematical insights. We believe that the present study is an important step toward deploying this powerful mathematical and computational machinery to evolutionary dynamics on higher-order networks.

\section*{Data Availability}

All data and code used for running simulations, model fitting, and plotting is available on a GitHub repository at: \url{https://github.com/RuodanL/fixation_probability}.

\section*{Supporting information}

\noindent\textbf{S1 Text. Fig A. Fixation probability for the star 3-uniform hypergraph with $N=1500$.} We compare it with the fixation probability for the Moran process. The inset on the left magnifies the result for $r$ values smaller than and close to $r=1$. The inset on the right magnifies the result for $r$ values greater than and close to $r=1$. In the main plot, the result for the Moran process, shown by the black line, is not identical but close to that for the star 3-uniform hypergraph, shown by the green line, such that the former is hidden behind the latter.
\textbf{Fig B. Fixation probability for the weighted one-mode projection of star 3-uniform hypergraphs.} We compare it with the fixation probability for the Moran process and star 3-uniform hypergraphs. (A) $N=4$. (B) $N=5$. (C) $N=20$. (D) $N=200$. The insets in (C) and (D) magnify the results for $r$ values smaller than and close to $r=1$. In the inset in (D), the results for the star 3-uniform hypergraph (shown by the blue line) and the Moran process (shown by the black line) are close to that for the one-mode projection (shown by the orange line) such that the blue and the black lines are almost hidden behind the orange line. In (A), (B), and the main panel of (D), the results for the Moran process (shown by the black lines) are not identical but close to those for the one-mode projection (shown by the orange lines) such that the former are hidden behind the latter. 
\textbf{Fig C. Initial position of the two nodes of type A on the cyclic 3-uniform hypergraph.} (A) The two nodes of type A do not share any hyperedge. (B) The two nodes of type A share two hyperedges. (C) The two nodes of type A share one hyperedge.
\textbf{Text A. Fixation probabilities for small 3-uniform hypergraphs.}
\textbf{Text B. Proof of Eq.~(22) for $N=4$.}
\textbf{Text C. Derivation of the entries of the transition probability matrix for $i \in \{2, \ldots, N-1\}$ for the cyclic 3-uniform hypergraph under model 1.}
\textbf{Text D. Proof of Eq.~(27) for $N=4$.}
\textbf{Text E. Proof of Eq.~(31) for $i=N-3$ and $i=N-2$.}
\textbf{Text F. Proof of Eq.~(32) for $i=2$ and $i=3$.}
\textbf{Text G. Derivation of the fixation probability for the star 3-uniform hypergraph under model 1.}
\textbf{Text H. Fixation probability for the star 3-uniform hypergraph with $N=1500$.}
\textbf{Text I. Fixation probability for the birth-death process on the weighted one-mode projection of the star 3-uniform hypergraph.}
\textbf{Text J. Derivation of the fixation probability for the cyclic 3-uniform hypergraph under model 2.}
\textbf{Text K. Derivation of the fixation probability for the star 3-uniform hypergraph under model 2.}

\noindent (PDF)

\section*{Author Contributions}

\noindent\textbf{Conceptualization:} Naoki Masuda.\\

\noindent\textbf{Formal analysis:} Ruodan Liu.\\

\noindent\textbf{Funding acquisition:} Naoki Masuda.\\

\noindent\textbf{Investigation:} Ruodan Liu, Naoki Masuda.\\

\noindent\textbf{Methodology:} Ruodan Liu, Naoki Masuda.\\

\noindent\textbf{Project administration:} Naoki Masuda.\\

\noindent\textbf{Software:} Ruodan Liu.\\

\noindent\textbf{Supervision:} Naoki Masuda.\\

\noindent\textbf{Validation:} Ruodan Liu.\\

\noindent\textbf{Visualization:} Ruodan Liu.\\

\noindent\textbf{Writing–original draft:} Ruodan Liu, Naoki Masuda.\\

\noindent\textbf{Writing–review\& editing:} Ruodan Liu, Naoki Masuda.

\newpage

\makeatletter
\newif\if@plaintag
\@plaintagfalse
\let\plaintagform@\tagform@
\newcommand{\plaintag}[1]{\begingroup\let\tagform@\plaintagform@\tag{#1}\endgroup}
\def\tagform@#1{\if@plaintag\else\refstepcounter{equation}\fi%
                \maketag@@@{(\textsc{\ignorespaces #1})}}
\newcommand{\ptag}[1]{\@plaintagtrue\tag{#1}\@plaintagfalse}
\makeatother

\section*{\centering{\Large{S1 Text}}}

\section*{Text A. Fixation probabilities for small 3-uniform hypergraphs}

\subsection*{A.1 Fixation probabilities for small 3-uniform hypergraphs under model 1}

For the complete 3-uniform hypergraph with $N=4$, we obtain
\begin{linenomath}
\begin{align}
x_2&=\frac{r(3r^2+8r+1)}{3(r+1)^3},\tag{S1}\\
x_3&=\frac{r(r+2)}{(r+1)^2}.\tag{S2}
\end{align}
\end{linenomath}
For the complete 3-uniform hypergraph with $N=5$, we obtain
\begin{linenomath}
\begin{align}
x_2&=\frac{r^2(8r^2+26r+8)}{8r^4+28r^3+33r^2+28r+8},\tag{S3}\\
x_3&=\frac{r(8r^3+28r^2+25r+2)}{8r^4+28r^3+33r^2+28r+8},\tag{S4}\\
x_4&=\frac{4r(2r^3+7r^2+8r+4)}{8r^4+28r^3+33r^2+28r+8}.\tag{S5}
\end{align}
\end{linenomath}
For the cyclic 3-uniform hypergraph with $N=4$, we obtain
\begin{linenomath}
\begin{align}
x_2&=\frac{r(3r^2+8r+1)}{3(r+1)^3},\tag{S6}\\
x_3&=\frac{r(r+2)}{(r+1)^2}.\tag{S7}
\end{align}
\end{linenomath}
For the cyclic 3-uniform hypergraph with $N=5$, we obtain
\begin{linenomath}
\begin{align}
x_2&=\frac{6r^2(r^2+3r+1)}{6r^4+20r^3+23r^2+20r+6},\tag{S8}\\
x_3&=\frac{r(6r^3+20r^2+17r+2)}{6r^4+20r^3+23r^2+20r+6},\tag{S9}\\
x_4&=\frac{2r(3r^3+10r^2+11r+6)}{6r^4+20r^3+23r^2+20r+6}.\tag{S10}
\end{align}
\end{linenomath}
For the star 3-uniform hypergraph with $N=4$, we obtain
\begin{linenomath}
\begin{align}
x_2&=\frac{r(3r^2+7r+1)}{3r^3+8r^2+8r+3},\tag{S11}\\
x_3&=\frac{3r(3r+5)}{4(3r^2+5r+3)}+\frac{9r(r^2+2r+1)}{4(9r^3+18r^2+14r+3)}.\tag{S12}
\end{align}
\end{linenomath}
For the star 3-uniform hypergraph with $N=5$, we obtain
\begin{linenomath}
\begin{align}
x_2&=\frac{2r^2(144r^3+548r^2+512r+71)}{5(144r^5+620r^4+1040r^3+1055r^2+750r+216)}+\frac{3r^2(72r^2+196r+72)}{5(72r^4+202r^3+217r^2+202r+72)},\tag{S13}\\
x_3&=\frac{3r(72r^3+202r^2+145r+6)}{5(72r^4+202r^3+217r^2+202r+72)}+\frac{12r(36r^4+125r^3+164r^2+88r+12)}{5(216r^5+750r^4+1055r^3+1040r^2+620r+144)},\tag{S14}\\
x_4&=\frac{4r(72r^3+202r^2+213r+108)}{5(72r^4+202r^3+217r^2+202r+72)}+\frac{8r(36r^4+110r^3+133r^2+110r+36)}{5(288r^5+880r^4+1070r^3+1025r^2+490r+72)}.\tag{S15}
\end{align}
\end{linenomath}

\subsection*{A.2 Fixation probabilities for small 3-uniform hypergraphs under model 2}

We obtain $x_1 = 0$ and $x_{N-1}=1$ for any 3-uniform hypergraph under model 2. Therefore, when $N=4$, only $x_2$, which we show in the main text, is nontrivial. When $N=5$, only $x_2$ and $x_3$ are nontrivial. Because we show $x_2$ in the case of $N=5$ in the main text, here we show $x_3$ in the case of $N=5$. For the cyclic 3-uniform hypergraph with $N=5$, we obtain
\begin{linenomath}
\begin{equation}
x_3=\frac{r^2-1}{2(r^2-r^{-1})}+\frac{1}{2(1+4r)}\left(4r+\frac{r-1}{r-r^{-2}}\right).\tag{S16}
\end{equation}
\end{linenomath}
For the star 3-uniform hypergraph with $N=5$, we obtain
\begin{linenomath}
\begin{equation}
x_3=\frac{3r(r+3)}{5(r^2+3r+1)}+\frac{2r(4r^2+15r+4)}{5(4r^3+15r^2+13r+3)}.\tag{S17}
\end{equation}
\end{linenomath}

\section*{Text B. Proof of Eq.~(22) for $N=4$\label{sec:model1-cyclic-case1-N=4-proof}}


If $N=4$, then $v_{\ell-2}$ and $v_{\ell+2}$ are identical. In this case, there are two sequences of events through which the state moves from $i=1$ to $i=0$ in one time step.
In the first sequence, $v_{\ell-2}$, which is of type B, is selected as parent with probability $1/(r+3)$.
Then, the hyperedge that contains the parent and $v_{\ell}$, i.e., $\{\ell-2, \ell-1, \ell\}$ or $\{\ell, \ell+1, \ell-2\}$, is used for reproduction, which occurs with probability $2/3$. 
In the second sequence, either $v_{\ell-1}$ or $v_{\ell+1}$ is selected as parent, which occurs with probability $2/(r+3)$. Then, one of the two hyperedges that contain the parent and $v_{\ell}$ is used for reproduction, which occurs with probability $2/3$. For example, if $v_{\ell-1}$ is selected as parent and hyperedge $\{ \ell-2, \ell-1, \ell \}$ or $\{ \ell-1, \ell, \ell+1 \}$ is used for reproduction, then the state moves from $1$ to $0$.
By summing up these probabilities, we obtain Eq.~(22).

\section*{Text C. Derivation of the entries of the transition probability matrix for $i \in \{2, \ldots, N-1\}$ for the cyclic 3-uniform hypergraph under model 1\label{sec:cyclic-model1-derivation}}

In this section, we derive the $(i, j)$ entries of the transition probability matrix for the cyclic 3-uniform hypergraph under model 1, where
$i \in \{2, \ldots, N-1\}$.

Similarly to the case of $i=1$, when $i=N-1$, there are three types of events that can occur in a time step. Without loss of generality, we assume that the $\ell$th node is the only node of type B (see Fig~3B for a schematic).
The state moves from $i=N-1$ to $i=N-3$ whenever $v_\ell$ is selected as parent, which occurs with probability $1/[r(N-1)+1]$. Therefore, we obtain
\begin{equation}
p_{N-1,N-3}=\frac{1}{r(N-1)+1}\tag{S18},
\end{equation}
which is Eq.~(26) in the main text.
Alternatively, the state moves from $i=N-1$ to $i=N$ such that type A fixates in the following two cases.
In the first case, either node $v_{\ell-2}$ or $v_{\ell+2}$, which is of type A, is selected as parent. If $N\ge 5$, these two nodes are distinct. Therefore, this event occurs with probability $2r/[r(N-1)+1]$.
Then, the hyperedge that contains the parent and $v_{\ell}$ is used for reproduction, which occurs with probability $1/3$.
%
%
In the second case, either $v_{\ell-1}$ or $v_{\ell+1}$ is selected as parent, which occurs with probability $2r/[r(N-1)+1]$. Then, one of the two hyperedges that contain the parent and $v_{\ell}$ is used for reproduction, which occurs with probability $2/3$. 
%
%
By summing up these probabilities, we obtain
\begin{equation}
p_{N-1,N}=\frac{2r}{r(N-1)+1}\tag{S19},
\label{eq:model1-cycle-i=n-1-increase1-SI}
\end{equation}
which is Eq.~(27) in the main text.
In fact, Eq.~\eqref{eq:model1-cycle-i=n-1-increase1-SI} also holds true for $N=4$; see~Text D for the proof.
If any other event occurs, then $i=N-1$ remains unchanged. Therefore, we obtain
\begin{linenomath}
\begin{align}
p_{N-1,N-1}&=1-p_{N-1,N-3}-p_{N-1,N}=\frac{r(N-3)}{r(N-1)+1}\tag{S20},\\
p_{N-1, j}&=0 \text{ if } j \neq N-3 , N-1, N,\tag{S21}
\end{align}
\end{linenomath}
which are Eqs.~(28) and (29), respectively, in the main text.

When $i\in \{2, \ldots, N-2\}$, there are five types of possible events. Without loss of generality, we assume that the $\ell$th to the $(\ell+i-1)$th nodes are of type A and that all other nodes are of type B (see Fig~3C for a schematic). If $\ell+i-1$ is larger than $N$, then we interpret $\ell+i-1$ as the number modulo $N$ (i.e., $\ell+i-1-N$); the same convention applies in the following text.
In the first type of event, either node $v_{\ell-1}$ or $v_{\ell+i}$, which is of type B, is selected as parent; this event occurs with probability $2/(ri+N-i)$. Then, the hyperedge that contains the parent and two nodes of type A is used for reproduction, which occurs with probability $1/3$. In this case, the state $i$ decreases by two. For example, if $v_{\ell-1}$ is selected as parent and hyperedge $\{ \ell-1, \ell, \ell+1 \}$ is selected, then the state moves from $i$ to $i-2$. Therefore, we obtain
\begin{equation}
p_{i,i-2}=\frac{2}{3(ri+N-i)},\tag{S22}
\end{equation}
which is Eq.~(30) in the main text.
In the second type of event, either $v_{\ell-2}$, $v_{\ell-1}$, $v_{\ell+i}$, or $v_{\ell+i+1}$, which is of type B, is selected as parent. If $i\le N-4$, these four nodes are distinct. Therefore, this event occurs with probability $4/(ri+N-i)$. Then, the hyperedge that contains the parent, a node of type B, and a node of type A is used for reproduction, which occurs with probability $1/3$. 
In this case, the state $i$ decreases by one. For example, if $v_{\ell-2}$ is selected as parent and hyperedge $\{ \ell-2, \ell-1, \ell \}$ is used for reproduction, then the state moves from $i$ to $i-1$ because $v_{\ell}$ turns from A to B. Therefore, we obtain
\begin{equation}
p_{i,i-1}=\frac{4}{3(ri+N-i)},\tag{S23}
\label{eq:model1-cyclic-decrease1-SI}
\end{equation}
which is Eq.~(31) in the main text.
In fact, Eq.~(eqref{eq:model1-cyclic-decrease1-SI} also holds true for $i=N-3$ and $i=N-2$ although some of $v_{\ell-2}$, $v_{\ell-1}$, $v_{\ell+i}$, and $v_{\ell+i+1}$ are identical nodes when $i=N-3$ or $i=N-2$; see~Text E for the proof.
In the third type of event, either $v_\ell$, $v_{\ell+1}$, $v_{\ell+i-2}$, or $v_{\ell+i-1}$, which is of type A, is selected as parent. If $i\ge 4$, these four nodes are distinct. Therefore, this event occurs with probability $4r/(ri+N-i)$. Then, the hyperedge that contains the parent node, a node of type A, and a node of type B is used for reproduction, which occurs with probability $1/3$.
In this case, state $i$ increases by one. For example, if $v_\ell$ is selected as parent and hyperedge $\{ \ell-1, \ell, \ell+1 \}$ is used for reproduction, then the state moves from $i$ to $i+1$ because $v_{\ell-1}$ turns from B to A. Therefore, we obtain
\begin{equation}
p_{i,i+1}=\frac{4r}{3(ri+N-i)},\tag{S24}
\label{eq:model1-cyclic-increase1-SI}
\end{equation}
which is Eq.~(32) in the main text.
In fact, Eq.~\eqref{eq:model1-cyclic-increase1-SI} also holds true for $i=2$ and $i=3$ although some of $v_\ell$, $v_{\ell+1}$, $v_{\ell+i-2}$, and $v_{\ell+i-1}$ are identical nodes when $i=2$ or $i=3$; see~Text F for the proof. 
In the fourth type of event, either node $v_{\ell}$ or $v_{\ell+i-1}$, which is of type A, is selected as parent; this event occurs with probability $2r/(ri+N-i)$. Then, the hyperedge that contains the parent and two nodes of type B is used for reproduction, which occurs with probability $1/3$. In this case, state $i$ increases by two. For example, if $v_{\ell}$ is selected as parent and hyperedge $\{ \ell-2, \ell-1, \ell \}$ is used for reproduction, then the state moves from $i$ to $i+2$. Therefore, we obtain
\begin{equation}
p_{i,i+2}=\frac{2r}{3(ri+N-i)},\tag{S25}
\end{equation}
which is Eq.~(33) in the main text.
If any other event occurs, then $i$ remains unchanged. Therefore, we obtain
\begin{equation}
p_{i,i}=1-p_{i,i-2}-p_{i,i-1}-p_{i,i+1}-p_{i,i+2},\tag{S26}
\end{equation}
which is Eq.~(34) in the main text.

\section*{Text D. Proof of Eq.~(27) for $N=4$\label{sec:model1-cyclic-case2-N=4-proof}}

If $N=4$, then $v_{\ell-2}$ and $v_{\ell+2}$ are identical. In this case, there are two sequences of events through which the state moves from $i=N-1$ to $i=N$.
In the first sequence, $v_{\ell-2}$, which is of type A, is selected as parent with probability $r/(3r+1)$.
Then, the hyperedge that contains the parent and $v_{\ell}$, i.e., $\{\ell-2, \ell-1, \ell\}$ or $\{\ell, \ell+1, \ell-2\}$, is used for reproduction, which occurs with probability $2/3$. 
In the second sequence, either $v_{\ell-1}$ or $v_{\ell+1}$, which is of type A, is selected as parent with probability $2r/(3r+1)$. Then, one of the two hyperedges that contain the parent and $v_{\ell}$ is used for reproduction, which occurs with probability $2/3$. 
By summing up these probabilities, we obtain Eq.~(27).

\section*{Text E. Proof of Eq.~(31) for $i=N-3$ and $i=N-2$\label{sec:model1-cyclic-case2-i=n-3and-n-2-proof}}

If $i=N-3$, then $v_{\ell-2}$ and $v_{\ell+i+1}$ are identical. In this case, there are two sequences of events through which the state decreases from $i$ to $i-1$. In the first sequence, either $v_{\ell-1}$ or $v_{\ell-3}$ is selected as parent, which occurs with probability $2/(rN-3r+3)$. Then, the hyperedge containing two nodes of type B (i.e., hyperedge $\{\ell-2, \ell-1, \ell\}$ if $v_{\ell-1}$ is the parent and hyperedge $\{\ell-4, \ell-3, \ell-2\}$ if $v_{\ell-3}$ is the parent) is used for reproduction, which occurs with probability $1/3$. In the second sequence, $v_{\ell-2}$ is selected as parent with probability $1/(rN-3r+3)$. Then, hyperedge $\{\ell-2, \ell-1, \ell\}$ or $\{\ell-4, \ell-3, \ell-2\}$ is used for reproduction, which occurs with probability $2/3$. By summing up these probabilities, we obtain Eq.~(31). 

If $i=N-2$, either of the two nodes of type B, i.e., $v_{\ell-1}$ or $v_{\ell-2}$, must be selected as parent for the state to move from $i$ to $i-1$. This event occurs with probability $2/(rN-2r+2)$. Then, either hyperedge $\{\ell-2, \ell-1, \ell\}$ or $\{\ell-3, \ell-2, \ell-1\}$ must be used for reproduction, which occurs with probability $2/3$. The product of these two probabilities yields Eq.~(31).

\section*{Text F. Proof of Eq.~(32) for $i=2$ and $i=3$\label{sec:model1-cyclic-case3-i=2and3-proof}}

If $i=2$, for the state to move from $i$ to $i+1$, either $v_\ell$ or $v_{\ell+1}$, which is of type A, must be selected as parent. This event occurs with probability $2r/(2r+N-2)$. Then, a hyperedge containing both $v_{\ell}$ and $v_{\ell+1}$ must be selected, which occurs with probability $2/3$. The product of these two probabilities yields Eq.~(32).

If $i=3$, then $v_{\ell+1}$ and $v_{\ell+i-2}$ are identical. In this case, there are two sequences of events with which the state increases from $i$ to $i+1$. In the first sequence, either $v_\ell$ or $v_{\ell+2}$ is selected as parent with probability $2r/(3r+N-3)$. Then, the hyperedge composed of the parent, $v_{\ell + 1}$, which is of type A, and a node of type B (i.e., $v_{\ell-1}$ if the parent is $v_{\ell}$, and $v_{\ell+3}$ if the parent is $v_{\ell+2}$) is used for reproduction with probability $1/3$. In the second sequence, $v_{\ell+1}$ is selected as parent with probability $r/(3r+N-3)$. Then, hyperedge \{${\ell-1}$, ${\ell}$, ${\ell+1}$\} or \{${\ell+1}$, ${\ell+2}$, ${\ell+3}$\} is used for reproduction with probability $2/3$. By summing up these probabilities, we obtain Eq.~(32).

\section*{Text G. Derivation of the fixation probability for the star 3-uniform hypergraph under model 1}\label{sub:star-model1-derivation}

In this section, we derive the fixation probability for the star 3-uniform hypergraph under model 1.

Assume that there are currently $i$ nodes of type A and that the state is $(i_1, i_2) = (1, i-1)$ with $i \in \{1, \ldots, N-1 \}$. There are five types of events that can occur next.

In the first type of event, a leaf node of type B is selected as parent, which occurs with probability $(N-i)/(ri+N-i)$. Then, a hyperedge that contains the parent, the hub node, and a different leaf node of type B is used for reproduction, which occurs with probability $(N-i-1)/(N-2)$. The state after this entire event is $(i_1, i_2) = (0, i-1)$ with $i \in \{1, \ldots, N-2 \}$. Therefore, we obtain
\begin{equation}
p_{(1, i-1) \to (0, i-1)} = \frac{N-i}{ri+N-i}\frac{N-i-1}{N-2}.\tag{S27}
\end{equation}
In the second type of event, a leaf node of type B is selected as parent with probability $(N-i)/(ri+N-i)$. Then, a hyperedge that contains the parent, the hub node, and a leaf node of type A is used for reproduction, which occurs with probability $(i-1)/(N-2)$. The state after this event is $(0, i-2)$ with $i \in \{2, \ldots, N-1 \}$. Therefore, we obtain
\begin{equation}
p_{(1, i-1) \to (0, i-2)} = \frac{N-i}{ri+N-i}\frac{i-1}{N-2}.\tag{S28}
\end{equation}
In the third type of event, the hub node, which is of type A, is selected as parent, which occurs with probability $r/(ri+N-i)$. Then, a hyperedge that contains the parent, a leaf node of type A, and a leaf node of type B is used for reproduction, which occurs with probability $(i-1)(N-i)/\binom{N-1}{2}$. Alternatively, a leaf node of type A is selected as parent, which occurs with probability $r(i-1)/(ri+N-i)$. Then, a hyperedge that contains the parent, the hub node, and a leaf node of type B is used for reproduction, which occurs with probability $(N-i)/(N-2)$. In both cases, the state after the event is $(1, i)$ with $i \in \{2, \ldots, N-1 \}$. Therefore, we obtain
\begin{equation}
p_{(1, i-1) \to (1, i)} = \frac{r}{ri+N-i}\frac{(i-1)(N-i)}{\binom{N-1}{2}}+\frac{r(i-1)}{ri+N-i}\frac{N-i}{N-2}.\tag{S29}
\end{equation}
In the fourth type of event, the hub node is selected as parent with probability $r/(ri+N-i)$. Then, a hyperedge that contains the parent and two leaf nodes of type B is used for reproduction, which occurs with probability $\binom{N-i}{2}/\binom{N-1}{2}$. The state after this event is $(1, i+1)$ with $i \in \{1, \ldots, N-2 \}$. Therefore, we obtain
\begin{equation}
p_{(1, i-1) \to (1, i+1)} = \frac{r}{ri+N-i}\frac{\binom{N-i}{2}}{\binom{N-1}{2}}.\tag{S30}
\end{equation}
If any other event occurs, then the state remains unchanged. Therefore, we obtain
\begin{equation}
p_{(1, i-1) \to (1, i-1)} = 1 - p_{(1, i-1) \to (0, i-1)}
- p_{(1, i-1) \to (0, i-2)} - p_{(1, i-1) \to (1, i)}
- p_{(1, i-1) \to (1, i+1)}.\tag{S31}
\end{equation}
We recall $\tilde{x}_{(i_1,i_2)}$ is the probability that type A fixates starting with state $(i_1, i_2)$.
We obtain
\begin{linenomath}
\begin{align}
\tilde{x}_{(1, i-1)} =& p_{(1, i-1) \to (0, i-1)} \tilde{x}_{(0, i-1)} + p_{(1, i-1) \to (0, i-2)} \tilde{x}_{(0, i-2)} + p_{(1, i-1) \to (1, i)} \tilde{x}_{(1, i)} \nonumber\\
&+ p_{(1, i-1) \to (1, i+1)}\tilde{x}_{(1, i+1)} + p_{(1, i-1) \to (1, i-1)} \tilde{x}_{(1, i-1)}.\tag{S32}
\label{eq:star-rho1}
\end{align}
\end{linenomath}

Assume that the current state is $(i_1, i_2) = (0, i)$ with $i \in \{1, \ldots, N-1\}$. There are five types of events that can occur next.

In the first type of event, a leaf node of type A is selected as parent with probability $ri/(ri+N-i)$. Then, a hyperedge that contains the parent, the hub node, and a different leaf node of type A is used for reproduction with probability $(i-1)/(N-2)$. The state after this entire event is $(i_1,i_2)=(1, i)$ with $i \in \{2, \ldots, N-1 \}$. Therefore, we obtain
\begin{equation}
p_{(0, i) \to (1, i)} = \frac{ri}{ri+N-i}\frac{i-1}{N-2}.\tag{S33}
\end{equation}
In the second type of event, a leaf node of type A is selected as parent with probability $ri/(ri+N-i)$. Then, a hyperedge that contains the parent, the hub node, and a leaf node of type B is used for reproduction, which occurs with probability $(N-i-1)/(N-2)$. The state after this event is $(1, i+1)$ with $i \in \{1, \ldots, N-2 \}$. Therefore, we obtain
\begin{equation}
p_{(0, i) \to (1, i+1)} = \frac{ri}{ri+N-i}\frac{N-i-1}{N-2}.\tag{S34}
\end{equation}
In the third type of event, the hub node, which is of type B, is selected as parent, with probability $1/(ri+N-i)$. Then, a hyperedge that contains the parent, a leaf node of type A, and a leaf node of type B, is used for reproduction, which occurs with probability $i(N-i-1)/\binom{N-1}{2}$. Alternatively, a leaf node of type B is selected as parent with probability $(N-i-1)/(ri+N-i)$. Then, a hyperedge that contains the parent, the hub node, and a leaf node of type A is used for reproduction, which occurs with probability $i/(N-2)$. In both cases, the state after the event is $(0, i-1)$ with $i \in \{1, \ldots, N-2 \}$. Therefore, we obtain
\begin{equation}
p_{(0, i) \to (0, i-1)} = \frac{1}{ri+N-i}\frac{i(N-i-1)}{\binom{N-1}{2}}+\frac{N-i-1}{ri+N-i}\frac{i}{N-2}.\tag{S35}
\end{equation}
In the fourth type of event, the hub node is selected as parent with probability $1/(ri+N-i)$. Then, the hyperedge that contains the parent and two leaf nodes of type A is used for reproduction, which occurs with probability $\binom{i}{2}/\binom{N-1}{2}$. The state after this event is $(0, i-2)$ with $i \in \{2, \ldots, N-1 \}$. Therefore, we obtain
\begin{equation}
p_{(0, i) \to (0, i-2)} = \frac{1}{ri+N-i}\frac{\binom{i}{2}}{\binom{N-1}{2}}.\tag{S36}
\end{equation}
If any other event occurs, then the state remains unchanged. Therefore, we obtain
\begin{linenomath}
\begin{align}
p_{(0, i) \to (0, i)} = 1 - p_{(0, i) \to (1, i)} - p_{(0, i) \to (1, i+1)} - p_{(0, i) \to (0, i-1)} - p_{(0, i) \to (0, i-2)}.\tag{S37}
\end{align}
\end{linenomath}
Using these transition probabilities, we obtain
\begin{linenomath}
\begin{align}
\tilde{x}_{(0, i)} =& p_{(0, i) \to (1, i)} \tilde{x}_{(1, i)} + p_{(0, i) \to (1, i+1)} \tilde{x}_{(1, i+1)} + 
p_{(0, i) \to (0, i-1)} \tilde{x}_{(0, i-1)} \nonumber\\
&+ p_{(0, i) \to (0, i-2)} \tilde{x}_{(0, i-2)} + p_{(0, i) \to (0, i)} \tilde{x}_{(0, i)}.\tag{S38}
\label{eq:star-rho2}
\end{align}
\end{linenomath}

We rewrite Eqs.~\eqref{eq:star-rho1} and \eqref{eq:star-rho2} as
\begin{equation}\label{veceqstar-SI}
\tilde{\bm{x}}=P\tilde{\bm{x}},\tag{S39}
\end{equation}
which is Eq.~(37) in the main text, where
\begin{equation}
P=
\left(\begin{array}{ c | c }
    C & D \\
    \hline
    E & F
\end{array}\right),\tag{S40}
\label{eq:P-block-SI}
\end{equation}
which is Eq.~(38) in the main text,
\begin{equation}
C=
\begin{pmatrix}
1 & 0 & 0 & \cdots & 0 & 0 & 0\\
p_{(0,1)\rightarrow(0,0)} & p_{(0,1)\rightarrow(0,1)} & 0 & \cdots & 0 & 0 & 0\\
p_{(0,2)\rightarrow(0,0)} & p_{(0,2)\rightarrow(0,1)} & p_{(0,2)\rightarrow(0,2)} & \cdots & 0 & 0 & 0\\
\vdots & \vdots & \vdots & \cdots & \vdots & \vdots & \\
0 & 0 & 0 & \cdots & p_{(0,N-1)\rightarrow(0,N-3)} & p_{(0,N-1)\rightarrow(0,N-2)} & p_{(0,N-1)\rightarrow(0,N-1)}
\end{pmatrix},\tag{S41}
\end{equation}
\begin{equation}
D=
\begin{pmatrix}
0 & 0 & 0 & 0 & \cdots & 0 & 0 & 0\\
0 & p_{(0,1)\rightarrow(1,1)} & p_{(0,1)\rightarrow(1,2)} & 0 & \cdots & 0 & 0 & 0\\
0 & 0 & p_{(0,2)\rightarrow(1,2)} & p_{(0,2)\rightarrow(1,3)} & \cdots & 0 & 0 & 0\\
\vdots & \vdots & \vdots & \vdots & \cdots & \vdots & \vdots & \\
0 & 0 & 0 & 0 & \cdots & 0 & p_{(0,N-2)\rightarrow(1,N-2)} & p_{(0,N-2)\rightarrow(1,N-1)}\\
0 & 0 & 0 & 0 & \cdots & 0 & 0 & p_{(0,N-1)\rightarrow(1,N-1)}
\end{pmatrix},\tag{S42}
\end{equation}
\begin{equation}
E=
\begin{pmatrix}
p_{(1,0)\rightarrow(0,0)} & 0 & 0 & \cdots & 0 & 0 & 0\\
p_{(1,1)\rightarrow(0,0)} & p_{(1,1)\rightarrow(0,1)} & 0 & \cdots & 0 & 0 & 0\\
0 & p_{(1,2)\rightarrow(0,1)} & p_{(1,2)\rightarrow(0,2)} & \cdots & 0 & 0 & 0\\
\vdots & \vdots & \vdots & \cdots & \vdots & \vdots & \vdots &\\
0 & 0 & 0 & \cdots & p_{(1,N-2)\rightarrow(0,N-3)} & p_{(1,N-2)\rightarrow(0,N-2)} & 0\\
0 & 0 & 0 & \cdots & 0 & p_{(1,N-1)\rightarrow(0,N-2)} & p_{(1,N-1)\rightarrow(0,N-1)}
\end{pmatrix},\tag{S43}
\end{equation}
and
\begin{equation}
F=
\begin{pmatrix}
p_{(1,0)\rightarrow(1,0)} & p_{(1,0)\rightarrow(1,1)} & p_{(1,0)\rightarrow(1,2)} & 0 & \cdots & 0 & 0 & 0\\
0 & p_{(1,1)\rightarrow(1,1)} & p_{(1,1)\rightarrow(1,2)} & p_{(1,1)\rightarrow(1,3)} & \cdots & 0 & 0 & 0\\
\vdots & \vdots & \vdots & \vdots & \cdots & \vdots & \vdots & \vdots &\\
0 & 0 & 0 & 0 & \cdots & 0 & p_{(1,N-2)\rightarrow(1,N-2)} & p_{(1,N-2)\rightarrow(1,N-1)}\\
0 & 0 & 0 & 0 & \cdots & 0 & 0 & 1
\end{pmatrix}.\tag{S44}
\end{equation}

\section*{Text H. Fixation probability for the star 3-uniform hypergraph with $N=1500$}

In Fig~\ref{fig:star_1500}, we compare the fixation probability for the star 3-uniform hypergraph having $N=1500$ nodes with that for the Moran process under model 1. The figure indicates that the star 3-uniform hypergraph with $N=1500$ is a suppressor of selection.
\renewcommand{\thefigure}{A}
\floatsetup[figure]{style=plain,subcapbesideposition=top}
\captionsetup{font={small,rm}} 
\captionsetup{labelfont=bf}
\begin{figure}[H]
  \centering
  \includegraphics[width=0.65\linewidth]{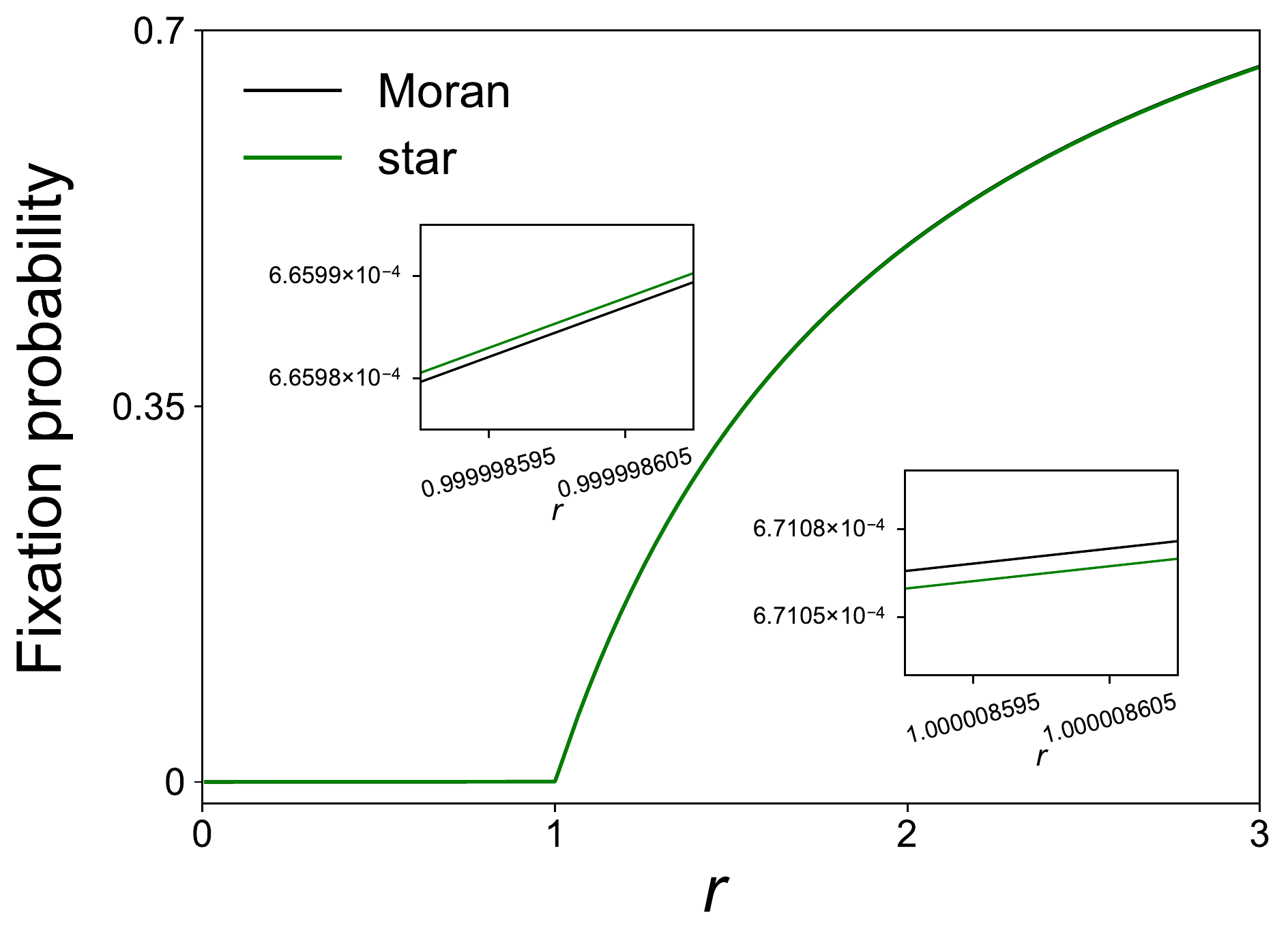}
   \caption{Fixation probability for the star 3-uniform hypergraph with $N=1500$. We compare it with the fixation probability for the Moran process. The inset on the left magnifies the result for $r$ values smaller than and close to $r=1$. The inset on the right magnifies the result for $r$ values greater than and close to $r=1$. In the main plot, the result for the Moran process, shown by the black line, is not identical but close to that for the star 3-uniform hypergraph, shown by the green line, such that the former is hidden behind the latter.}
   \label{fig:star_1500}
\end{figure}

\section*{Text I. Fixation probability for the birth-death process on the weighted one-mode projection of the star 3-uniform hypergraph\label{sec:one-mode-star}}

In this section, we consider the weighted one-mode projection of the star 3-uniform hypergraph and examine the fixation probability on the obtained weighted network. 

We denote the weighted one-mode projection of the star 3-uniform hypergraph by $G$. Note that $G$ is a weighted complete graph; the edge between the hub node and any leaf node has weight $N-2$, and the edge between any pair of leaf nodes has weight 1. In the birth-death process on $G$, in each time step, we select one node as parent with the probability proportional to its fitness. Then, the parent selects one of its neighbors with the probability proportional to the edge weight and converts the neighbor into the parent's type. As is the case for the star 3-uniform hypergraph, the symmetry in $G$ allows us to specify the state of the birth-death process by tuple $(i_1, i_2)$, where $i_1 \in \{ 0, 1 \}$ specifies whether the hub is of type A (i.e., $i_1 = 1$) or B (i.e., $i_1 = 0$), and $i_2 \in \{0, 1, \ldots, N-1 \}$ is the number of leaf nodes of type A. The total number of nodes of type A is equal to $i=i_1+i_2$. The fixation of type A and B corresponds to $(i_1, i_2) = (1, N-1)$ and $(0, 0)$, respectively.

Assume that the current state is $(i_1, i_2) = (1, i-1)$ with $i-1 \in \{0, 1, \ldots, N-2 \}$. There are four types of events that can occur in the next time step.
In the first type of event, a leaf node of type B is selected as parent with probability $(N-i)/(ri+N-i)$. Then, the edge between the parent and the hub node is used for reproduction with probability $1/2$. The state after this entire event is $(i_1, i_2) = (0, i-1)$. Therefore, we obtain
\begin{equation}
p_{(1, i-1) \to (0, i-1)} = \frac{N-i}{ri+N-i} \cdot \frac{1}{2}.\tag{S45}
\end{equation}
In the second type of event, a leaf node of type B is selected as parent with probability $(N-i)/(ri+N-i)$. Then, the edge between the parent and a leaf node of type A is used for reproduction with probability $(i-1)/2(N-2)$. The state after this event is $(1, i-2)$. Therefore, we obtain
\begin{equation}
p_{(1, i-1) \to (1, i-2)} = \frac{N-i}{ri+N-i} \cdot \frac{i-1}{2(N-2)}.\tag{S46}
\end{equation} 
In the third type of event, the hub node, which is of type A, is selected as parent with probability $r/(ri+N-i)$. Then, the edge between the parent and a leaf node of type B is used for reproduction with probability $(N-i)/(N-1)$. Alternatively, a leaf node of type A is selected as parent with probability $r(i-1)/(ri+N-i)$. Then, the edge between the parent and a leaf node of type B is used for reproduction with probability $(N-i)/2(N-2)$. In both cases, the state after the event is $(1, i)$. Therefore, we obtain
\begin{equation}
p_{(1, i-1) \to (1, i)}= \frac{r}{ri+N-i} \cdot \frac{N-i}{N-1}+\frac{r(i-1)}{ri+N-i} \cdot \frac{N-i}{2(N-2)}.\tag{S47}
\end{equation}
If any other event occurs, then the state remains unchanged. Therefore, we obtain
\begin{equation}
p_{(1, i-1)\to(1, i-1)}=1-p_{(1, i-1)\to(0, i-1)}-p_{(1, i-1)\to(1, i-2)}-p_{(1, i-1)\to(1, i)}.\tag{S48}
\end{equation}
We remind that $\tilde{x}_{(i_1,i_2)}$ represents the probability that A fixates starting with state $(i_1, i_2)$. 
We obtain
\begin{linenomath}
\begin{align}
\tilde{x}_{(1, i-1)} &= p_{(1, i-1) \to (0, i-1)} \tilde{x}_{(0, i-1)} + p_{(1, i-1) \to (1, i-2)} \tilde{x}_{(1, i-2)} + 
p_{(1, i-1) \to (1, i)} \tilde{x}_{(1, i)} + p_{(1, i-1) \to (1, i-1)} \tilde{x}_{(1, i-1)}.\tag{S49}
\label{eq:wstar-rho1}
\end{align}
\end{linenomath}

Now we assume that the current state is $(i_1, i_2) = (0, i)$ with $i \in \{1, 2, \ldots, N-1 \}$. There are four types of events that can occur in the next time step.
In the first type of event, a leaf node of type A is selected as parent with probability $ri/(ri+N-i)$. Then, the edge between the parent and the hub node is used for reproduction with probability $1/2$. The state after this event is $(i_1, i_2) = (1, i)$. Therefore, we obtain
\begin{equation}
p_{(0, i) \to (1, i)} = \frac{ri}{ri+N-i} \cdot \frac{1}{2}.\tag{S50}
\end{equation}
In the second type of event, a leaf node of type A is selected as parent with probability $ri/(ri+N-i)$. Then, the edge between the parent and a leaf node of type B is used for reproduction with probability $(N-i-1)/2(N-2)$. The state after this event is $(0, i+1)$. Therefore, we obtain
\begin{equation}
p_{(0, i) \to (0, i+1)} = \frac{ri}{ri+N-i} \cdot \frac{N-i-1}{2(N-2)}.\tag{S51}
\end{equation} 
In the third type of event, the hub node, which is of type B, is selected as parent with probability $1/(ri+N-i)$. Then, the edge between the parent and a leaf node of type A is used for reproduction with probability $i/(N-1)$. Alternatively, a leaf node of type B is selected as parent with probability $(N-i-1)/(ri+N-i)$. Then, the edge between the parent and a leaf node of type A is used for reproduction with probability $i/2(N-2)$. In both cases, the state after the event is $(0, i-1)$. Therefore, we obtain
\begin{equation}
p_{(0, i) \to (0, i-1)}= \frac{1}{ri+N-i} \cdot \frac{i}{N-1}+\frac{N-i-1}{ri+N-i} \cdot \frac{i}{2(N-2)}.\tag{S52}
\end{equation}
If any other event occurs, then the state remains unchanged. Therefore, we obtain
\begin{equation}
p_{(0, i)\to(0, i)}=1-p_{(0, i)\to(1, i)}-p_{(0, i)\to(0, i+1)}-p_{(0, i)\to(0, i-1)}.\tag{S53}
\end{equation}
Using these transition probabilities, we obtain
\begin{linenomath}
\begin{align}
\tilde{x}_{(0, i)} &= p_{(0, i) \to (1, i)} \tilde{x}_{(1, i)} + p_{(0, i) \to (0, i+1)} \tilde{x}_{(0, i+1)} + 
p_{(0, i) \to (0, i-1)} \tilde{x}_{(0, i-1)} + p_{(0, i) \to (0, i)} \tilde{x}_{(0, i)}.\tag{S54}
\label{eq:wstar-rho2}
\end{align}
\end{linenomath}
Equation~(37) also holds true for the one-mode projection. We use the scipy DGESV algorithm to numerically solve Eq.~(37) to obtain $\tilde{x}_{(0, i)}$ and $\tilde{x}_{(1, i-1)}$. Then, we obtain $x_i$ using Eq.~(39).

We computed $x_1$ by numerically solving Eq.~(37) for $N=4, 5, 20$, and $200$. Fig~\ref{fig:star_omp} compares the obtained $x_1$ values with those for the Moran process and the star 3-uniform hypergraph under model 1. We find that the one-mode projection of the star 3-uniform hypergraph is a weak amplifier of selection. Note that, for $N=4$, $5$, and $200$, the results for the one-mode projection almost overlap those for the Moran process such that the orange lines are hidden behind the black lines in Fig~\ref{fig:star_omp}A, \ref{fig:star_omp}B, and \ref{fig:star_omp}D.

\renewcommand{\thefigure}{B}
\floatsetup[figure]{style=plain,subcapbesideposition=top}
\captionsetup{font={small,rm}} 
\captionsetup{labelfont=bf}
\begin{figure}[H]
  \centering
  \includegraphics[width=0.9\linewidth]{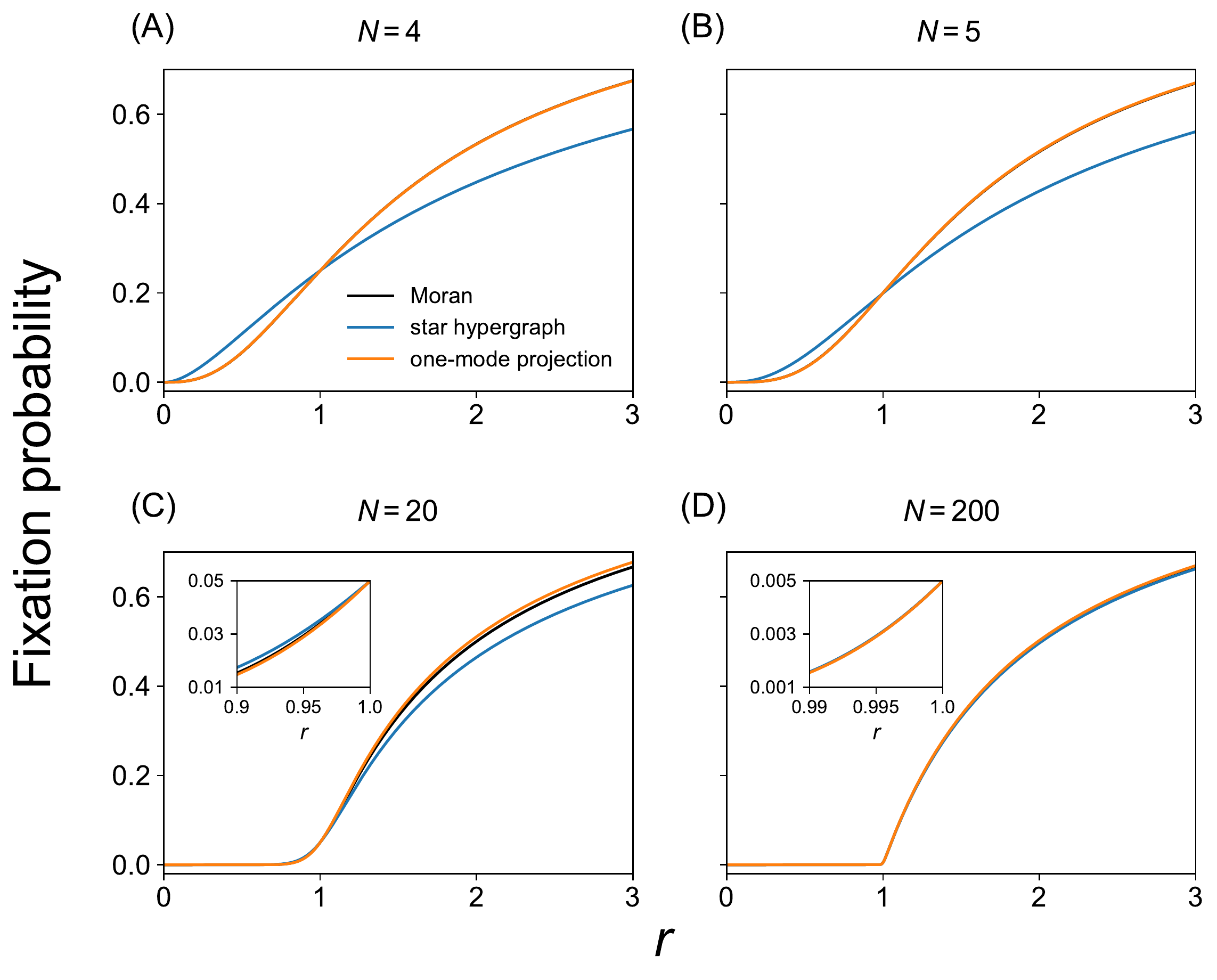}
   \caption{Fixation probability for the weighted one-mode projection of star 3-uniform hypergraphs. We compare it with the fixation probability for the Moran process and star 3-uniform hypergraphs. (A) $N=4$. (B) $N=5$. (C) $N=20$. (D) $N=200$. The insets in (C) and (D) magnify the results for $r$ values smaller than and close to $r=1$. In the inset in (D), the results for the star 3-uniform hypergraph (shown by the blue line) and the Moran process (shown by the black line) are close to that for the one-mode projection (shown by the orange line) such that the blue and the black lines are almost hidden behind the orange line. In (A), (B), and the main panel of (D), the results for the Moran process (shown by the black lines) are not identical but close to those for the one-mode projection (shown by the orange lines) such that the former are hidden behind the latter.}
   \label{fig:star_omp}
\end{figure}

\section*{Text J. Derivation of the fixation probability for the cyclic 3-uniform hypergraph under model 2\label{sec:cyclic-model2-derivation}}

In this section, we derive the fixation probability for the cyclic 3-uniform hypergraph under model 2. We assume that there are initially just two mutants that are uniformly distributed.

The fixation of type A can occur only when the two nodes that are initially of type A share at least one hyperedge. Once such a hyperedge is selected for reproduction, all the nodes of type A are consecutive along the cycle without being interrupted by nodes of type B. Note that the two nodes that initially have type A may be next to each other on the cycle already in the initial condition. Therefore, to calculate the fixation probability on the cyclic 3-uniform hypergraph, it suffices to track the number of consecutive nodes having type A, which we denote by $i$. For $N\geq 5$, the initial condition is either of the following three types.

\subsection*{First type of initial condition}

In the first type of initial condition, the two nodes of type A do not share any hyperedge (see Fig~\ref{fig:m2_cyclic}A for a schematic), which occurs with probability
$N(N-5)/\left[2\binom{N}{2}\right]$. The probability that type A fixates under this initial condition, denoted by $x_2^\prime$, is given by
\begin{equation}
x_2^{\prime} = 0.\tag{S55}
\label{eq:cyclic-model2-x_2'=0}
\end{equation}

\subsection*{Second type of initial condition}

In the second type of initial condition, the two nodes of type A share two hyperedges, i.e., these two nodes are next to each other on the cycle (see Fig~\ref{fig:m2_cyclic}B). This initial condition occurs with probability $N/\binom{N}{2}$. Let $x_2^{\prime\prime}$ be the fixation probability for type A under this initial condition. In general, let $x_i^{\prime\prime}$ with $i\in \{1, 2, \ldots, N-1\}$ be the fixation probability for type A when there are $i$ consecutive nodes of type A and all the other nodes are of type B. We calculate $x_2^{\prime\prime}$ by tracking the number of consecutive nodes with type A, i.e., $i$, as follows.

\textbf{\underline{Move of the state from $i$ to $i-1$:}}
If $i\in \{2, \ldots, N-2\}$, there are three types of events that can occur next. Without loss of generality, we assume that the $\ell$th to the ($\ell+i-1$)th nodes are of type A and that all the other nodes are of type B (see Fig~3C).
In the first type of event, the state moves from $i$ to $i-1$. If $i\le N-4$, either $v_{\ell-2}$, $v_{\ell-1}$, $v_{\ell+i}$, and $v_{\ell+i+1}$, which is of type B, is selected as parent with probability $4/(ri+N-i)$. Then, the hyperedge that contains the parent node, a node of type B, and a node of type A is used for reproduction with probability $1/3$. In this case, $i$ decreases by one. For example, if $v_{\ell-2}$ is selected as parent and hyperedge $\{ \ell-2, \ell-1, \ell \}$ is used for reproduction, then the state moves from $i$ to $i-1$. Therefore, we obtain
\begin{equation}
p_{i,i-1}=\frac{4}{ri+N-i}\cdot\frac{1}{3}.\tag{S56}
\label{eq:model2-cyclic-decrease1}
\end{equation}
\renewcommand{\thefigure}{C}
\floatsetup[figure]{style=plain,subcapbesideposition=top}
\captionsetup{font={small,rm}} 
\captionsetup{labelfont=bf}
\begin{figure}[H]
  \centering
  \includegraphics[width=1.0\linewidth]{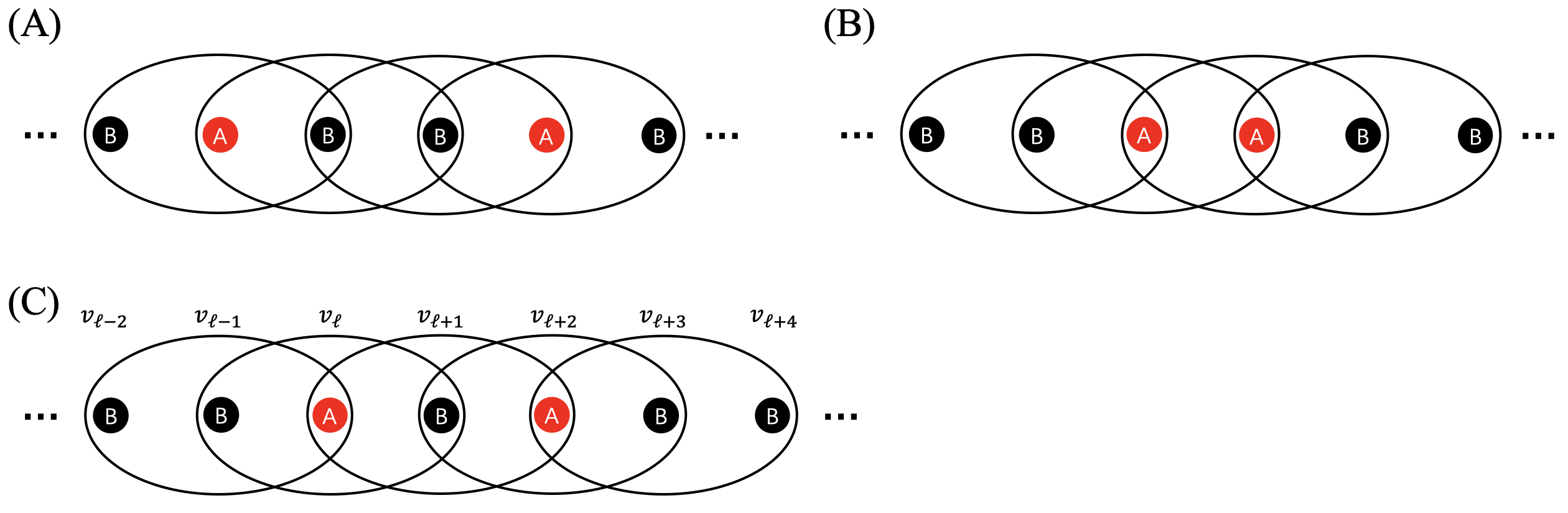}
\caption{Initial position of the two nodes of type A on the cyclic 3-uniform hypergraph. (A) The two nodes of type A do not share any hyperedge. (B) The two nodes of type A share two hyperedges. (C) The two nodes of type A share one hyperedge.}   
   \label{fig:m2_cyclic}
\end{figure}
\noindent If $i=N-3$, there are two sequences of events through which the state decreases from $i$ to $i-1$. In the first sequence, either $v_{\ell-1}$ or $v_{\ell-3}$ is selected as parent, which occurs with probability $2/(rN-3r+3)$. Then, the hyperedge containing two nodes of type B (i.e., hyperedge $\{\ell-2, \ell-1, \ell\}$ if $v_{\ell-1}$ is the parent and hyperedge $\{\ell-4, \ell-3, \ell-2\}$ if $v_{\ell-3}$ is the parent) is used for reproduction, which occurs with probability $1/3$. In the second sequence, $v_{\ell-2}$ is selected as parent with probability $1/(rN-3r+3)$. Then, hyperedge $\{\ell-2, \ell-1, \ell\}$ or $\{\ell-4, \ell-3, \ell-2\}$ is used for reproduction, which occurs with probability $2/3$. By summing up these probabilities, we obtain Eq.~\eqref{eq:model2-cyclic-decrease1}. If $i=N-2$, either of the two nodes of type B, i.e., $v_{\ell-1}$ or $v_{\ell-2}$, must be selected as parent for the state to move from $i$ to $i-1$. This event occurs with probability $2/(rN-2r+2)$. Then, either hyperedge $\{\ell-2, \ell-1, \ell\}$ or $\{\ell-3, \ell-2, \ell-1\}$ must be used for reproduction, which occurs with probability $2/3$. The product of these two probabilities coincides with Eq.~\eqref{eq:model2-cyclic-decrease1}. Therefore, Eq.~\eqref{eq:model2-cyclic-decrease1} holds true for any $i \in \{2, \ldots, N-2\}$.

\textbf{\underline{Move of the state from $i$ to $i+1$:}}
In the second type of event, the state moves from $i$ to $i+1$.
If $i\ge 4$, either $v_\ell$, $v_{\ell+1}$, $v_{\ell+i-2}$, or $v_{\ell+i-1}$, which is of type A, has to be selected as parent with probability $4r/(ri+N-i)$. Then, the hyperedge that contains the parent node, a node of type A, and a node of type B has to be used for reproduction, which occurs with probability $1/3$. For example, if $v_\ell$ is selected as parent and hyperedge $\{ \ell-1, \ell, \ell+1 \}$ is used for reproduction, then the state moves from $i$ to $i+1$. Therefore, we obtain
\begin{equation}
p_{i,i+1}=\frac{4r}{ri+N-i}\cdot\frac{1}{3}.\tag{S57}
\label{eq:model2-cyclic-increase1}
\end{equation}
If $i=3$, there are two sequences of events through which the state increases from $i$ to $i+1$. In the first sequence, either $v_{\ell}$ or $v_{\ell+2}$ is selected as parent, which occurs with probability $2r/(3r+N-3)$. Then, the hyperedge containing two nodes of type A (i.e., hyperedge \{${\ell-1}$, ${\ell}$, ${\ell+1}$\} if $v_{\ell}$ is the parent and hyperedge \{${\ell+1}$, ${\ell+2}$, ${\ell+3}$\} if $v_{\ell+2}$ is the parent) is used for reproduction, which occurs with probability $1/3$. In the second sequence, $v_{\ell+1}$ is selected as parent with probability $r/(3r+N-3)$. Then, hyperedge \{${\ell-1}$, ${\ell}$, ${\ell+1}$\} or
\{${\ell+1}$, ${\ell+2}$, ${\ell+3}$\} is used for reproduction, which occurs with probability $2/3$. If we sum these probabilities, we obtain Eq.~\eqref{eq:model2-cyclic-increase1}. If $i=2$, either of the two nodes of type A, i.e., $v_{\ell}$ or $v_{\ell+1}$, must be selected as parent for the state to move from $i$ to $i+1$. This event occurs with probability $2r/(2r+N-2)$. Then, either hyperedge \{${\ell-1}$, ${\ell}$, ${\ell+1}$\} or  \{${\ell}$, ${\ell+1}$, ${\ell+2}$\} must be used for reproduction, which occurs with probability $2/3$. The product of these two probabilities coincides with Eq.~\eqref{eq:model2-cyclic-increase1}. Therefore, Eq.~\eqref{eq:model2-cyclic-increase1} holds true for any $i\in \{2, \ldots, N-2\}$. 

\textbf{\underline{No move of the state from $i$:}}
Because $i$ remains unchanged if any other event occurs, we obtain 
\begin{equation}
p_{i,i}=1-p_{i,i-1}-p_{i,i+1}.\tag{S58}
\label{eq:model2-p_ii}
\end{equation}

\textbf{\underline{Derivation of $x_2^{\prime\prime}$:}}
Therefore, the fixation probability of type A starting from $i$ consecutive nodes of type A, i.e., $x_i^{\prime\prime}$, satisfies
\begin{linenomath}
\begin{align}
x_0 &= x_1 = 0,\tag{S59}
\label{eq:x0-m2-cyclic}\\
x_i^{\prime\prime} &= p_{i,i-1}x_{i-1}^{\prime\prime}+p_{i,i}x_{i}^{\prime\prime}+p_{i,i+1}x_{i+1}^{\prime\prime}, \quad i\in \{2, \ldots, N-2\},\tag{S60}
\label{eq:recursive-m2-cyclic}\\
x_{N-1} &= x_N = 1.\tag{S61}
\label{xn-1-m2-cyclic}
\end{align}
\end{linenomath}
Note that $x_1^{\prime\prime} = x_1$ and $x_{N-1}^{\prime\prime} = x_{N-1}$.
Similar to the analysis of the fixation probability for the complete 3-uniform hypergraph, we set
\begin{equation}
\overline{y}_i \equiv x_i^{\prime\prime}-x_{i-1}^{\prime\prime}, \quad i\in \{2, \ldots, N-1\}.\tag{S62}
\label{eq:model2-overliney-def}
\end{equation}
Note that $\sum_{i=2}^{N-1} \overline{y}_i = x_{N-1}^{\prime\prime}-x_1^{\prime\prime}=1$. Let
\begin{equation}
\overline{\gamma}_i=p_{i,i-1}/p_{i,i+1}.\tag{S63}
\label{eq:model2-overlinegamma-def}
\end{equation}
By combining Eqs.~\eqref{eq:model2-p_ii}, \eqref{eq:recursive-m2-cyclic}, \eqref{eq:model2-overliney-def}, and \eqref{eq:model2-overlinegamma-def}, 
we obtain
\begin{equation}
\overline{y}_{i+1} = \overline{y}_i \overline{\gamma}_i,\tag{S64}
\end{equation}
which leads to
\begin{equation}
\overline{y}_i = \overline{y}_2 \prod_{k=2}^{i-1} \overline{\gamma}_k = x_2^{\prime\prime} \prod_{k=2}^{i-1} \overline{\gamma}_k.\tag{S65}
\label{eq:overeliney-in-overelinegamma}
\end{equation}
Using Eq.~\eqref{eq:overeliney-in-overelinegamma}, we obtain
\begin{linenomath}
\begin{align}
1 = \sum_{i=2}^{N-1} \overline{y}_i &=x_2^{\prime\prime}\left[1+\overline{\gamma}_2+\overline{\gamma}_2\overline{\gamma}_3 + \cdots + \prod_{k=2}^{N-2} \overline{\gamma}_k\right] \notag\\
&=x_2^{\prime\prime}\left[1+r^{-1}+r^{-2}+\cdots+r^{-(N-3)}\right] \notag\\
&=x_2^{\prime\prime}\frac{1-r^{-(N-2)}}{1-r^{-1}}.\tag{S66}
\label{eq:model2-cycle-x2-infer}
\end{align}
\end{linenomath}
Therefore, we obtain
\begin{equation}
x_2^{\prime\prime}=\frac{1-r^{-1}}{1-r^{-(N-2)}}.\tag{S67}
\label{eq:model2-cycle-x2-case2}
\end{equation}

\subsection*{Third type of initial condition}

In the third type of initial condition, the two nodes of type A share one hyperedge, implying that there is a node of type B between the two nodes of type A.
Without loss of generality, we assume that the $\ell$th and the $(\ell+2)$th nodes are of type A and that all the other nodes are of type B (see Fig~\ref{fig:m2_cyclic}C). This initial condition, which we denote by $2^{*}$, occurs with probability $N/\binom{N}{2}$. Now we calculate the fixation probability for type A starting from state $2^{*}$, which we denote by $x_2^{'''}$. To ease the discussion, in the remainder of this section, we denote by $i$ the state in which consecutive $i$ nodes on the cycle are of type A and the other $N-i$ nodes are of type B.

\textbf{\underline{Move of the state from $2^{*}$ to $i=1$:}}
If $N\ge 7$, the state moves from $2^{*}$ to $i=1$ in one time step if either of the following two types of events occurs. In the first type of event, either node $v_{\ell-2}$ or $v_{\ell+4}$, which is of type B, is selected as parent. This event occurs with probability $2/(2r+N-2)$. Then, the hyperedge that contains the parent, a node of type B, and a node of type A, is used for reproduction, which occurs with probability $1/3$. For example, if $v_{\ell-2}$ is the parent and hyperedge $\{\ell-2, \ell-1, \ell\}$ is used for reproduction, then the state moves from $2^{*}$ to $i=1$. In the second type of event, one of the nodes $v_{\ell-1}$, $v_{\ell+1}$, and $v_{\ell+3}$, which is of type B, is selected as parent, which occurs with probability $3/(2r+N-2)$. Then, one of the two hyperedges that contains the parent, a node of type B, and a node of type A, is used for reproduction, which occurs with probability $2/3$. For example, if $v_{\ell-1}$ is selected as parent and hyperedge $\{ \ell-2, \ell-1, \ell \}$ or $\{ \ell-1, \ell, \ell+1 \}$ is used for reproduction, then the state moves from $2^{*}$ to $i=1$. By summing up these probabilities, we obtain the probability that the state moves from $2^{*}$ to $i=1$ as
\begin{equation}
p_{2^{*},1}=\frac{1}{2r+N-2}\cdot\frac{8}{3}.\tag{S68}
\label{eq:model2-cycle-x2-case3-decrease1}
\end{equation}

If $N=6$, the state moves from $2^*$ to $i=1$ in one time step if the following event occurs. Either $v_{\ell-2}$, $v_{\ell-1}$, $v_{\ell+1}$, and $v_{\ell+3}$, which is of type B, is selected as parent with probability $4/(2r+4)$. Then, one of the two hyperedges that contains the parent, a node of type B, and a node of type A, is used for reproduction, which occurs with probability $2/3$. For example, if $v_{\ell-2}$ is selected as parent and hyperedge $\{ \ell-4, \ell-3, \ell-2 \}$ or $\{ \ell-2, \ell-1, \ell \}$ is used for reproduction, then the state moves from $2^{*}$ to $i=1$. The product of these two probabilities coincides with Eq.~\eqref{eq:model2-cycle-x2-case3-decrease1}. 

If $N=5$, the state moves from $2^{*}$ to $i=1$ in one time step if either of the following two types of events occurs. In the first type of event, either node $v_{\ell-1}$ or $v_{\ell+3}$, which is of type B, is selected as parent with probability $2/(2r+3)$. Then, the hyperedge that contains the parent, a node of type B, and a node of type A, is used for reproduction, which occurs with probability $1$. For example, if $v_{\ell-1}$ is the parent and any hyperedge that contains $v_{\ell-1}$ is used for reproduction, then the state moves from $2^*$ to $i=1$. In the second type of event, the node $v_{\ell+1}$, which is of type B, is selected as parent with probability $1/(2r+3)$. Then, one of the two hyperedges $\{\ell-1, \ell, \ell+1\}$ or $\{\ell+1, \ell+2, \ell+3\}$ is used for reproduction, which occurs with probability $2/3$. By summing up these probabilities, we obtain Eq.~\eqref{eq:model2-cycle-x2-case3-decrease1}. Therefore, Eq.~\eqref{eq:model2-cycle-x2-case3-decrease1} holds true for any $N\ge 5$.

\textbf{\underline{Move of the state from $2^{*}$ to $i=3$:}}
The state moves from $2^{*}$ to $i=3$ if either $v_{\ell}$ or $v_{\ell+2}$, which is of type A, is selected as parent with probability $2r/(2r+N-2)$, and then, the hyperedge that contains $v_{\ell}$, $v_{\ell+1}$, and $v_{\ell+2}$ is used for reproduction with probability $1/3$. Therefore, we obtain
\begin{equation}
p_{2^{*},3}=\frac{2r}{2r+N-2}\cdot\frac{1}{3}.\tag{S69}
\end{equation}

\textbf{\underline{No move of the state from $2^{*}$:}}
If any other event occurs at state $2^{*}$, the state remains unchanged. Therefore, we obtain
\begin{linenomath}
\begin{align}
p_{2^{*},2^{*}}&=1-p_{2^{*},1}-p_{2^{*},3}=\frac{4r+3N-14}{3(2r+N-2)},\tag{S70}\\
p_{2^{*},j} &= 0 \text{ if } j \neq 1, 2^{*}, 3.\tag{S71}
\end{align}
\end{linenomath}  

\textbf{\underline{Derivation of $x_2^{\prime\prime\prime}$:}}
If the state moves from $2^{*}$ to either $i=1$ or $i=3$, all the nodes of type A are consecutively numbered without being interrupted by nodes of type B afterwards. Therefore, we obtain
\begin{equation}
x_2^{'''}=p_{2^*,1}x_1+p_{2^*,2^*}x_2^{'''}+p_{2^*,3}x_3^{\prime\prime}.\tag{S72}
\label{x2-triprime}
\end{equation}
By substituting Eqs.~\eqref{eq:x0-m2-cyclic} and \eqref{eq:model2-cycle-x2-case2} in Eq.~\eqref{eq:recursive-m2-cyclic} for $i=2$, we obtain
\begin{equation}
x_3^{\prime\prime}=\frac{1-r^{-2}}{1-r^{-(N-2)}}.\tag{S73}
\label{eq:model2-cycle-x3-case2}
\end{equation}
By substituting Eqs.~\eqref{eq:x0-m2-cyclic} and \eqref{eq:model2-cycle-x3-case2} in Eq.~\eqref{x2-triprime}, we obtain
\begin{equation}
x_2^{'''}=\frac{r-r^{-1}}{(r+4)\left[1-r^{-(N-2)}\right]}.\tag{S74}
\label{eq:model2-cycle-x2-case3}
\end{equation}

\subsection*{Weighted sum to obtain $x_2$}

By combining Eqs.~\eqref{eq:cyclic-model2-x_2'=0}, \eqref{eq:model2-cycle-x2-case2}, and \eqref{eq:model2-cycle-x2-case3} with the respective probability,
we obtain the fixation probability for type A when there are initially two uniformly randomly distributed mutants, given in
Eq.~(70), as follows:
\begin{linenomath}
\begin{align}
x_2&= \frac{N-5}{N-1}x_2^{\prime} + \frac{2}{N-1}x_2^{''}+\frac{2}{N-1}x_2^{'''} \notag\\
&=\frac{2}{N-1}\left\{\frac{1-r^{-1}}{1-r^{-(N-2)}}+\frac{r-r^{-1}}{(r+4)\left[1-r^{-(N-2)}\right]}\right\},\tag{S75}
\label{eq:model2-cyclic-x2-appendix}
\end{align}
\end{linenomath}
where $N \ge 5$.

\subsection*{Derivation of $x_2$ for $N=4$}

For $N=4$, the first type of initial condition occurs with probability $0$. 

The second type of initial condition occurs with the same probability as in the case of $N\ge 5$, i.e., with probability $4/\binom{4}{2}=2/3$. In this case, the state moves from $2$ to $1$ if the following event occurs. Either $v_{\ell-1}$ or $v_{\ell+2}$, which is of type B, is selected as parent with probability $2/(2r+2)$. Then, either hyperedge $\{\ell+2, \ell-1, \ell\}$ or $\{\ell+1, \ell+2, \ell-1\}$ must be used for reproduction, which occurs with probability $2/3$. 
Note that the node indices $\ell-1$ and $\ell$ are equivalent to $\ell+3$ and $\ell+4$ because we interpret the node index with modulo $N$.
The product of these two probabilities coincides with Eq.~\eqref{eq:model2-cyclic-decrease1}. Therefore, Eq.~\eqref{eq:model2-cyclic-decrease1} also holds true for $N=4$. Similarly, Eq.~\eqref{eq:model2-cyclic-increase1} holds true for $N=4$. Therefore, using Eq.~\eqref{eq:model2-cycle-x2-infer}, we obtain
\begin{equation}
x_2^{\prime\prime}=\frac{r}{r+1}.\tag{S76}
\label{eq:model2-cycle-x2-case2-N4}
\end{equation}

Under the third type of initial condition, which occurs with probability $2/\binom{4}{2}=1/3$, the state moves from $2^*$ to $1$ if the following event occurs. Either $v_{\ell-1}$ or $v_{\ell+1}$, which is of type B, is selected as parent with probability $2/(2r+2)$. Then, either hyperedge $\{\ell-1, \ell, \ell+1\}$ or $\{\ell+1, \ell+2, \ell-1\}$ must be used for reproduction, which occurs with probability $2/3$.  Therefore, we obtain
\begin{equation}
p_{2^{*},1}=\frac{2}{3r+3}.\tag{S77}
\label{eq:model2-cycle-x2-case3-decrease1-N4}
\end{equation}
The state moves from $2^*$ to $3$ if the following event occurs. Either $v_{\ell}$ or $v_{\ell+2}$, which is of type A, is selected as parent with probability $2r/(2r+2)$. Then, either hyperedge $\{\ell, \ell+1, \ell+2\}$ or $\{\ell+2, \ell-1, \ell\}$ must be used for reproduction, which occurs with probability $2/3$. Therefore, we obtain
\begin{equation}
p_{2^{*},3}=\frac{2r}{3r+3}.\tag{S78}
\label{eq:model2-cycle-x2-case3-increase1-N4}
\end{equation}
If any other event occurs at state $2^{*}$, the state remains unchanged. Therefore, we obtain
\begin{linenomath}
\begin{align}
p_{2^{*},2^{*}}&=1-p_{2^{*},1}-p_{2^{*},3}=\frac{r+1}{3r+3},\tag{S79}\\
p_{2^{*},j} &= 0 \text{ if } j \neq 1, 2^{*}, 3.\tag{S80}
\end{align}
\end{linenomath}  
By substituting Eqs.~\eqref{eq:x0-m2-cyclic} and \eqref{xn-1-m2-cyclic} in Eq.~\eqref{x2-triprime}, we obtain
\begin{equation}
x_2^{'''}=\frac{r}{r+1}.\tag{S81}
\label{eq:model2-cycle-x2-case3-N4}
\end{equation}

Therefore, we obtain
\begin{equation}
x_2=\frac{2}{3}x_2^{''}+\frac{1}{3}x_2^{'''}=\frac{r}{r+1}\tag{S82}
\end{equation}
for $N=4$.

\section*{Text K. Derivation of the fixation probability for the star 3-uniform hypergraph under model 2\label{sub:star-model2-derivation}}

We derive the fixation probability for the star 3-uniform hypergraph under model 2 in this section. We use same notations as those in section~3.1.3.

Assume that the current state is $(i_1, i_2) = (1, i-1)$ with $i-1 \in \{0, 1, \ldots, N-2 \}$. There are three types of events that can occur in the next time step.
In the first type of event, a leaf node of type B is selected as parent with probability $(N-i)/(ri+N-i)$. Then, a hyperedge that contains the parent, the hub node, and a different leaf node of type B, is used for reproduction with probability $(N-i-1)/(N-2)$. The state after this entire event is $(i_1, i_2) = (0, i-1)$. Therefore, we obtain
\begin{equation}
p_{(1, i-1) \to (0, i-1)} = \frac{N-i}{ri+N-i} \cdot \frac{N-i-1}{N-2}.\tag{S83}
\end{equation}
In the second type of event, the hub node, which is of type A, is selected as parent with probability $r/(ri+N-i)$. Then, a hyperedge that contains the parent, a leaf node of type A, and a leaf node of type B, is used for reproduction with probability $(i-1)(N-i)/\binom{N-1}{2}$. Alternatively, a leaf node of type A is selected as parent with probability $r(i-1)/(ri+N-i)$. Then, a hyperedge that contains the parent, the hub node, and a leaf node of type B, is used for reproduction with probability $(N-i)/(N-2)$. In both cases, the state after the event is $(1, i)$. Therefore, we obtain
\begin{equation}
p_{(1, i-1) \to (1, i)} = \frac{r}{ri+N-i} \cdot \frac{(i-1)(N-i)}{\binom{N-1}{2}}+\frac{r(i-1)}{ri+N-i} \cdot \frac{N-i}{N-2}.\tag{S84}
\end{equation}
If any other event occurs, then the state remains unchanged. Therefore, we obtain
\begin{equation}
p_{(1, i-1) \to (1, i-1)} = 1 - p_{(1, i-1) \to (0, i-1)}- p_{(1, i-1) \to (1, i)}.\tag{S85}
\end{equation}
We remind that $\tilde{x}_{(i_1,i_2)}$ is the probability that type A fixates when the initial state is $(i_1, i_2)$.
We obtain
\begin{linenomath}
\begin{align}
\tilde{x}_{(1, i-1)} &= p_{(1, i-1) \to (0, i-1)} \tilde{x}_{(0, i-1)} + p_{(1, i-1) \to (1, i)} \tilde{x}_{(1, i)} + p_{(1, i-1) \to (1, i-1)} \tilde{x}_{(1, i-1)}.\tag{S86}
\label{eq:m2star-rho1}
\end{align}
\end{linenomath}

Now we assume that the current state is $(i_1, i_2) = (0, i)$ with $i \in \{1, \ldots, N-1 \}$. Then, there are three types of events that can occur in the next time step. In the first type of event, a leaf node of type A is selected as parent with probability $ri/(ri+N-i)$. Then, a hyperedge that contains the parent, the hub node, and a different leaf node of type A, is used for reproduction with probability $(i-1)/(N-2)$. The state after this event is $(i_1,i_2)=(1, i)$. Therefore, we obtain
\begin{equation}
p_{(0, i) \to (1, i)} = \frac{ri}{ri+N-i} \cdot \frac{i-1}{N-2}.\tag{S87}
\end{equation}
In the second type of event, the hub node, which is of type B, is selected as parent with probability $1/(ri+N-i)$. Then, a hyperedge that contains the parent, a leaf node of type A, and a leaf node of type B, is used for reproduction with probability $i(N-i-1)/\binom{N-1}{2}$. Alternatively, a leaf node of type B is selected as parent with probability $(N-i-1)/(ri+N-i)$. Then, a hyperedge that contains the parent, the hub node, and a leaf node of type A, is used for reproduction with probability $i/(N-2)$. In both cases, the state after the event is $(0, i-1)$. Therefore, we obtain
\begin{equation}
p_{(0, i) \to (0, i-1)} = \frac{1}{ri+N-i} \cdot \frac{i(N-i-1)}{\binom{N-1}{2}}+\frac{N-i-1}{ri+N-i} \cdot \frac{i}{N-2}.\tag{S88}
\end{equation}
If any other event occurs, then the state remains unchanged. Therefore, we obtain
\begin{linenomath}
\begin{align}
p_{(0, i) \to (0, i)} = 1 - p_{(0, i) \to (1, i)} - p_{(0, i) \to (0, i-1)}.\tag{S89}
\end{align}
\end{linenomath}
Using these transition probabilities, we obtain
\begin{linenomath}
\begin{align}
\tilde{x}_{(0, i)} &= p_{(0, i) \to (1, i)} \tilde{x}_{(1, i)} + p_{(0, i) \to (0, i-1)} \tilde{x}_{(0, i-1)} + p_{(0, i) \to (0, i)} \tilde{x}_{(0, i)}.\tag{S90}
\label{eq:m2star-rho2}
\end{align}
\end{linenomath}

Equations~\eqref{eq:m2star-rho1} and \eqref{eq:m2star-rho2} lead to
\begin{equation}
\tilde{x}_{(1,i)} = \alpha_i\tilde{x}_{(1,i-1)}+(1-\alpha_i)\tilde{x}_{(0,i-1)}\tag{S91}
\label{alpha}
\end{equation}
and
\begin{equation}
\tilde{x}_{(0,i)} = \beta_i\tilde{x}_{(1,i)}+(1-\beta_i)\tilde{x}_{(0,i-1)},\tag{S92}
\label{beta}
\end{equation}
respectively, where
\begin{linenomath}
\begin{align}
\alpha_i&=1+\frac{(N-i-1)(N-1)}{r(i-1)(N+1)},\tag{S93}\\
\beta_i&=\frac{r(i-1)(N-1)}{r(i-1)(N-1)+(N-i-1)(N+1)}.\tag{S94}
\end{align}
\end{linenomath}
We rewrite Eqs.~\eqref{alpha} and \eqref{beta} as
\begin{equation}\label{recursion}
\bm{\varrho_i}=A_i\bm{\varrho_{i-1}},\tag{S95}
\end{equation}
where $\bm{\varrho_i}=(\tilde{x}_{(1,i)}, \tilde{x}_{(0,i)})^{\top}$, and $A_i$ is the $2\times 2$ matrix given by
\begin{equation}
A_i=
\begin{pmatrix}
\alpha_i & 1-\alpha_i \\
\alpha_i\beta_i & 1-\alpha_i\beta_i 
\end{pmatrix}.\tag{S96}
\end{equation}
Equation~\eqref{recursion} yields
\begin{equation}
\bm{\varrho_i}=A_iA_{i-1}\cdots A_2\bm{\varrho_{1}}.\tag{S97}
\end{equation}
Therefore,
\begin{equation}\label{recursion-to-N}
\begin{pmatrix}
1\\
1
\end{pmatrix}=
\bm{\varrho_{N-1}}=A_{N-1}A_{N-2}\cdots A_2\bm{\varrho_{1}}.\tag{S98}
\end{equation}
Equation~(39) also holds true for model 2. Therefore, we obtain $\tilde{x}_{(1,1)}$ from Eq.~\eqref{recursion-to-N}, $\tilde{x}_{(0,2)}$ from Eq.~\eqref{recursion}, and finally $x_2$ from Eq.~(39).

By analytically solving Eq.~\eqref{recursion-to-N} for $N=4$ and $N=5$, we obtain Eqs.~(55) and (56), respectively.


\begin{thebibliography}{10}
\bibliographystyle{abbrv}

\makeatletter

\renewcommand\@biblabel[1]{#1.}

\makeatother
\expandafter\ifx\csname url\endcsname\relax
  \def\url#1{\texttt{#1}}\fi
\expandafter\ifx\csname urlprefix\endcsname\relax\def\urlprefix{URL }\fi
\expandafter\ifx\csname href\endcsname\relax
  \def\href#1#2{#2} \def\path#1{#1}\fi
  
\bibitem{Nowak2006book}
Nowak MA. Evolutionary Dynamics: Exploring the Equations of Life. Cambridge: Harvard Press; 2006.
\bibitem{Szabo2007PhyRep}
Szab\'{o} G, F\'{a}th G. Evolutionary games on graphs. Phys Rep. 2007; 446, 97--216. \url{https://doi.org/10.1016/j.physrep.2007.04.004}
\bibitem{Nowak2010RSocB}
Nowak MA, Tarnita CE, Antal T. Evolutionary dynamics in structured populations. Phil Trans R Soc B. 2010; 365, 19--30. \url{https://doi.org/10.1098/rstb.2009.0215} PMID: 20008382
\bibitem{Perc2013JRSoc}
Perc M, G\'{o}mez-Garde\~{n}es J, Szolnoki A, Flor\'{i}a LM, Moreno Y. Evolutionary dynamics of group interactions on structured populations: a review. J R Soc Interface. 2013; 10, 20120997. \url{https://doi.org/10.1098/rsif.2012.0997} PMID: 23303223
\bibitem{Lieberman2005nature}
Lieberman E, Hauert C, Nowak MA. Evolutionary dynamics on graphs. Nature. 2005; 433, 312--316. \url{https://doi.org/10.1038/nature03204} PMID: 15662424
\bibitem{Durrett1999siam}
Durrett R. Stochastic spatial models. SIAM Rev. 1999; 41, 677--718. \url{https://doi.org/10.1137/S0036144599354707}
\bibitem{Antal2006PhysRevLett}
Antal T, Redner S, Sood V. Evolutionary dynamics on degree-heterogeneous graphs. Phys Rev Lett. 2006; 96, 188104. \url{https://doi.org/10.1103/PhysRevLett.96.188104} PMID: 16712402
\bibitem{Sood2008PhysRevE}
Sood V, Antal T, Redner S. Voter models on heterogeneous networks. Phys Rev E. 2008; 77, 041121. \url{https://doi.org/10.1103/PhysRevE.77.041121} PMID: 18517592
\bibitem{Castellano2009RevModPhys}
Castellano C, Fortunato S, Loreto V. Statistical physics of social dynamics. Rev Mod Phys. 2009; 81, 591--646. \url{https://doi.org/10.1103/RevModPhys.81.591}
\bibitem{Ohtsuki2006nature}
Ohtsuki H, Hauert C, Lieberman E, Nowak MA. A simple rule for the evolution of cooperation on graphs and social networks. Nature. 2006; 441, 502--505. \url{https://doi.org/10.1038/nature04605} PMID:
16724065
\bibitem{Allen2017nature}
Allen B, Lippner G, Chen Y, Fotouhi B, Momeni N, Yau S, et al. Evolutionary dynamics on any population structure. Nature. 2017; 544, 227--230. \url{https://doi.org/10.1038/nature21723} PMID: 28355181
\bibitem{Lambiotte2019NatPhys}
Lambiotte R, Rosvall M, Scholtes I. From networks to optimal higher-order models of complex systems. Nat Phys. 2019; 15, 313--320. \url{https://doi.org/10.1038/s41567-019-0459-y} PMID: 30956684
\bibitem{Battiston2020PhyRep}
Battiston F , Cencetti G, Iacopini I, Latora V, Lucas M, Patania A, et al. Networks beyond pairwise interactions: Structure and dynamics. Phys Rep. 2020; 874, 1--92. \url{https://doi.org/10.1016/j.physrep.2020.05.004}
\bibitem{Bianconi2021book}
Bianconi G. Higher-Order Networks: An Introduction to Simplicial Complexes. Cambridge: Cambridge University Press; 2021.
\bibitem{Santos2008nature}
Santos FC, Santos MD, Pacheco JM. Social diversity promotes the emergence of cooperation in public goods games. Nature. 2008; 454, 213--216. \url{https://doi.org/10.1038/nature06940} PMID:
18615084
\bibitem{Rodriguez2021natHumBehav}
Alvarez-Rodriguez U, Battiston F, de Arruda GF, Moreno Y, Perc M, Latora V. Evolutionary dynamics of higher-order interactions in social networks. Nat Hum Behav. 2021; 5, 586--595. \url{https://doi.org/10.
1038/s41562-020-01024-1} PMID: 33398148
\bibitem{Sahasrabuddhe2021J.ofPhys}
Sahasrabuddhe R, Neuh\"{a}user L, Lambiotte R. Modelling non-linear consensus dynamics on hypergraphs. J Phys Complex. 2021; 2, 025006. \url{https://doi.org/10.1088/2632-072X/abcea3}
\bibitem{Neuhauser2021PhysRevE}
Neuh\"{a}user L, Lambiotte R, Schaub MT. Consensus dynamics on temporal hypergraphs. Phys Rev E. 2021; 104, 064305. \url{https://doi.org/10.1103/PhysRevE.104.064305} PMID: 35030846
\bibitem{Majhi2022JRSoc}
Majhi S, Perc M, Ghosh D. Dynamics on higher-order networks: a review. J R Soc Interface. 2022; 19, 20220043. \url{https://doi.org/10.1098/rsif.2022.0043} PMID: 35317647
\bibitem{Hindersin2015PLOS}
Hindersin L, Traulsen A. Most undirected random graphs are amplifiers of selection for birth-death dynamics, but suppressors of selection for death-birth dynamics. PLoS Comput Biol. 2015; 11, e1004437. \url{https://doi.org/10.1371/journal.pcbi.1004437} PMID: 26544962
\bibitem{Cuesta2018PlosOne}
Cuesta FA, Sequeiros PG, Rojo \'{A}L. Evolutionary regime transitions in structured populations. PLoS ONE. 2018; 13, e0200670. \url{https://doi.org/10.1371/journal.pone.0200670}
\bibitem{Allen2021PlosComputBiol}
Allen B, Sample C, Steinhagen P, Shapiro J, King M, Hedspeth T, et al. Fixation probabilities in graph-structured populations under weak selection. PLoS Comput Biol. 2021; 17, e1008695. \url{https://doi.org/
10.1371/journal.pcbi.1008695} PMID: 33529219
\bibitem{Giakkoupis2016arxiv}
Giakkoupis G. Amplifiers and suppressors of selection for the Moran process on undirected graphs. arXiv:1611.01585 [Preprint]. 2016. Available from: \url{https://arxiv.org/abs/1611.01585}
\bibitem{Galanis2017JAcm}
Galanis A, G\"{o}bel A, Goldberg LA, Lapinskas J, Richerby D. Amplifiers for the Moran process. J ACM. 2017; 64, 5, 1--90. \url{https://doi.org/10.1145/3019609}
\bibitem{Pavlogiannis2017SciRep}
Pavlogiannis A, Tkadlec J, Chatterjee K, Nowak MA. Amplification on undirected population structures: Comets beat stars. Sci Rep. 2017; 7, 82. \url{https://doi.org/10.1038/s41598-017-00107-w} PMID: 28250441
\bibitem{Pavlogiannis2018CommBiol}
Pavlogiannis A, Tkadlec J, Chatterjee K, Nowak MA. Construction of arbitrarily strong amplifiers of natural selection using evolutionary graph theory. Commun Biol. 2018; 1, 71. \url{https://doi.org/10.1038/
s42003-018-0078-7} PMID: 30271952
\bibitem{Goldberg2019TheorComputSci}
Goldberg LA, Lapinskas J, Lengler J, Meier F, Panagiotou K, Pfister P. Asymptotically optimal amplifiers for the Moran process. Theor Comput Sci. 2019; 758,  73--93. \url{https://doi.org/10.1016/j.tcs.2018.08.005}
\bibitem{Cuesta2017PlosOne}
Cuesta FA, Sequeiros PG, Rojo \'{A}L. Suppressors of selection. PLoS ONE. 2017; 12, e0180549. \url{https://doi.org/10.1371/journal.pone.0180549}
\bibitem{Masuda2009JTB}
Masuda N. Directionality of contact networks suppresses selection pressure in evolutionary dynamics. Journal of Theoretical Biology. 2009; 258, 323--334. \url{https://doi.org/10.1016/j.jtbi.2009.01.025} PMID:
19490863
\bibitem{Shakarian2012Biosys}
Shakarian P, Roos P, Johnson A. A review of evolutionary graph theory with applications to game theory. Biosystems. 2012; 107, 66--80. \url{https://doi.org/10.1016/j.biosystems.2011.09.006} PMID:
22020107
\bibitem{Pattni2015ProcRSocA}
Pattni K, Broom M, Rycht\'{a}\v{r} J, Silvers J. Evolutionary graph theory revisited: when is an evolutionary process equivalent to the Moran process? Proc R Soc A. 2015; 471, 20150334. \url{https://doi.org/10.
1098/rspa.2015.0334}
\bibitem{Liggett1985book}
Liggett TM. Interacting Particle Systems. New York: Springer; 1985.
\bibitem{deOliveira1992JourofStatPhys}
de Oliveira MJ. Isotropic majority-vote model on a square lattice. J Stat Phys. 1992; 66, 273--281. \url{https://doi.org/10.1007/BF01060069}
\bibitem{Adlam2015ProcRSocA}
Adlam B, Chatterjee K, Nowak MA. Amplifiers of selection. Proc R Soc A. 2015; 471, 20150114. \url{https://doi.org/10.1098/rspa.2015.0114}
\bibitem{Verrall1994DisMath}
Verrall H. Hamilton decompositions of complete 3-uniform hypergraphs. Disc Math. 1994; 132, 333--348. \url{https://doi.org/10.1016/0012-365X(92)00572-9}
\bibitem{Askar2015}
Askar SS, Karawia AA. On solving pentadiagonal linear systems via transformations. Math Prob Eng. 2015; 2015, 232456. \url{https://doi.org/10.1155/2015/232456}
\bibitem{scipy.linalg.solve.banded}
SciPy solve for banded matrices. \url{https://docs.scipy.org/doc/scipy/reference/generated/scipy.linalg.solve_banded.html}. Accessed on May 8, 2022.
\bibitem{Muller2019pentapy}
M\"{u}ller S. pentapy: A Python toolbox for pentadiagonal linear systems. J Open Source Software. 2019; 4, 1759. \url{https://doi.org/10.21105/joss.01759}
\bibitem{Ohtsuki2006ProcRSocB}
Ohusuki H, Nowak MA. Evolutionary games on cycles. Proc R Soc B. 2006; 273, 2249--2256. \url{https://
doi.org/10.1098/rspb.2006.3576}
\bibitem{Broom2008ProcRSocA}
Broom M, Rycht\'{a}\v{r} J. An analysis of the fixation probability of a mutant on special classes of non-directed graphs. Proc R Soc A. 2008; 464, 2609--2627. \url{https://doi.org/10.1098/rspa.2008.0058}
\bibitem{DonnellyWelsh1983MPCPS}
Donnelly P, Welsh D. Finite particle systems and infection models. Math Proc Camb Phil Soc. 1983; 94, 167--182. \url{https://doi.org/10.1017/S0305004100060989}
\bibitem{MasudaOhtsuki2009NewJPhys}
Masuda N, Ohtsuki H. Evolutionary dynamics and fixation probabilities in directed networks. New J Phys. 2009; 11, 033012. \url{https://doi.org/10.1088/1367-2630/11/3/033012}
\bibitem{Broom2010ProcRSocA}
Broom M, Hadjichrysanthou C, Rycht\'{a}\v{r} J. Two results on evolutionary processes on general non-directed graphs. Proc R Soc A. 2010; 466, 2795--2798. \url{https://doi.org/10.1098/rspa.2010.0067}
\bibitem{Faust1997SocNetw}
Faust K. Centrality in affiliation networks. Soc Netw. 1997; 19, 157--191. \url{https://doi.org/10.1016/
S0378-8733(96)00300-0}
\bibitem{Kunegis2013Konect}
Kunegis J. KONECT -- The Koblenz Network Collection. In Proc Int Conf on World Wide Web Companion. 2013; pp. 1343--1350. \url{https://doi.org/10.1145/2487788.2488173}
\bibitem{Klimt2004ProcEurConf}
Klimt B, Yang Y. The Enron corpus: A new dataset for email classification research. In Proc Eur Conf Mach Learn. 2004; pp. 217--226. \url{https://doi.org/10.1007/978-3-540-30115-8_22}
\bibitem{Benson2018ProcNatAcad}
Benson AR, Abebe R, Schaub MT, Jadbabaie A, Kleinberg J. Simplicial closure and higher-order link prediction. Proc Natl Acad Sci USA. 2018; 115, E11221--E11230. \url{https://doi.org/10.1073/pnas.
1800683115} PMID: 30413619
\bibitem{CCA2017Senate}
Stewart C, Woon J. Congressional Committee Assignments, 103rd to 114th Congresses, 1993-2017: Senate. Available from: \url{https://web.mit.edu/17.251/www/data_page.html}.
\bibitem{Chodrow2021SciAdv}
Chodrow PS, Veldt N, Benson AR. Generative hypergraph clustering: From blockmodels to modularity. Sci Adv. 2021; 7, eabh1303. \url{https://doi.org/10.1126/sciadv.abh1303} PMID: 34233880
\bibitem{Mastrandrea2015PLOS}
Mastrandrea R, Fournet J, Barrat A. Contact patterns in a high school: A comparison between data collected using wearable sensors, contact diaries and friendship surveys. PLoS ONE. 2015; 10, e0136497. \url{https://doi.org/10.1371/journal.pone.0136497} PMID: 26325289
\bibitem{Chodrow2020J.ComplexNetw}
Chodrow PS. Configuration models of random hypergraphs. J Comp Netw. 2020; 8, cnaa018. \url{https://doi.org/10.1093/comnet/cnaa018}
\bibitem{Nakajima2022IEEE}
Nakajima K, Shudo K, Masuda N. Randomizing hypergraphs preserving degree correlation and local clustering. IEEE Trans Netw Sci Eng. 2022; 9, 1139--1153. \url{https://doi.org/10.1109/TNSE.2021.
3133380}
\bibitem{Tkadlec2020PlosComputBiol}
Tkadlec J, Pavlogiannis A, Chatterjee K, Nowak MA. Limits on amplifiers of natural selection under death-Birth updating. PLoS Comput Biol. 2020; 16, e1007494. \url{https://doi.org/10.1371/journal.pcbi.
1007494} PMID: 31951609
\bibitem{Maruyama1970GenetRes}
Maruyama T. On the fixation probability of mutant genes in a subdivided population. Genet Res. 1970; 15, 221--225. \url{https://doi.org/10.1017/S0016672300001543} PMID: 5480754
\bibitem{Yagoobi2023JRSoc}
Yagoobi S, Sharma N, Traulsen A. Categorizing update mechanisms for graph-structured metapopulations. J R Soc Interface. 2023; 20, 20220769. \url{https://doi.org/10.1098/rsif.2022.0769} PMID: 36919418
\bibitem{Yagoobi2021SciRep}
Yagoobi S, Traulsen A. Fixation probabilities in network structured metapopulations. Scientific Reports. 2021; 11, 17979. \url{https://doi.org/10.1038/s41598-021-97187-6} PMID: 34504152
\bibitem{Marrec2021PRL}
Marrec L, Lamberti I, Bitbol A. Toward a universal model for spatially structured populations. Phys Rev Lett. 2021; 127, 218102. \url{https://doi.org/10.1103/PhysRevLett.127.218102} PMID: 34860074
\bibitem{Tarnita2009PNAS}
Tarnita CE, Antal T, Ohtsuki H, Nowak MA. Evolutionary dynamics in set structured populations. Proc Natl Acad Sci USA. 2009; 106, 8601--8604. \url{https://doi.org/10.1073/pnas.0903019106} PMID:
19433793
\bibitem{Nowak2010PhilTransRSocB}
Nowak MA, Tarnita CE, Antal T. Evolutionary dynamics in structured populations. Phil Trans R Soc B. 2010; 365, 19--30. \url{https://doi.org/10.1098/rstb.2009.0215} PMID: 20008382
\bibitem{FuTarnita2012SciRep}
Fu F, Tarnita CE, Christakis NA, Wang L, Rand DG, Nowak MA. Evolution of in-group favoritism. Sci Rep. 2012; 2, 460. \url{https://doi.org/10.1038/srep00460} PMID: 22724059
\bibitem{Frean2013ProcRSocB}
Frean M, Rainey PB, Traulsen A. The effect of population structure on the rate of evolution. Proc R Soc B. 2013; 280, 20130211. \url{https://doi.org/10.1098/rspb.2013.0211} PMID: 23677339
\bibitem{Moller2019CommBiol}
M\"{o}ller M, Hindersin L, Traulsen A. Exploring and mapping the universe of evolutionary graphs identifies structural properties affecting fixation probability and time. Commun Biol. 2019; 2, 137. \url{https://doi.org/
10.1038/s42003-019-0374-x} PMID: 31044162
\bibitem{Tkadlec2019CommBiol}
Tkadlec J, Pavlogiannis A, Chatterjee K, Nowak MA. Population structure determines the tradeoff between fixation probability and fixation time. Commun Biol. 2019; 2, 138. \url{https://doi.org/10.1038/
s42003-019-0373-y} PMID: 31044163
\bibitem{Tkadlec2021NatComm}
Tkadlec J, Pavlogiannis A, Chatterjee K, Nowak MA. Fast and strong amplifiers of natural selection. Nat Commun. 2021; 12, 4009. \url{https://doi.org/10.1038/s41467-021-24271-w} PMID: 34188036
\bibitem{Kuo2021biorxiv}
Kuo YP, Carja O. Evolutionary graph theory beyond pairwise interactions: higher-order network motifs shape times to fixation in structured populations. BioRxiv [Preprint]. 2021 bioRxiv 450017 [posted 2021 Jun 28]. Available from: \url{https://doi.org/10.1101/2021.06.26.450017}
\bibitem{XiaoWu2019PlosComputBiol}
Xiao Y, Wu B. Close spatial arrangement of mutants favors and disfavors fixation. PLoS Comput Biol. 2019; 15, e1007212. \url{https://doi.org/10.1371/journal.pcbi.1007212} PMID: 31525178
\bibitem{SharmaTraulsen2022Pnas}
Sharma N, Traulsen A. Suppressors of fixation can increase average fitness beyond amplifiers of selection. Proc Natl Acad Sci USA. 2022; 119, e2205424119. \url{https://doi.org/10.1073/pnas.2205424119}
PMID: 36067304

\end{thebibliography}
\end{document}